\newcommand{\version}{November 30, 2007}
\swapnumbers \pagestyle{myheadings}
\theoremstyle{plain}
\newtheorem{thm}{THEOREM}
\newtheorem{lem}[thm]{LEMMA}
\newtheorem{proposition}[thm]{PROPOSITION}
\newtheorem{definition}[thm]{DEFINITION}
\newcommand{\beq}{\begin{equation}}
\newcommand{\eeq}{\end{equation}}
\def\beqa{\begin{eqnarray}}
\def\eeqa{\end{eqnarray}}
\newcommand{\C}{{\mathbb C}}
\newcommand{\R}{{\mathbb R}}
\newcommand{\one}{{\mathbbm 1}}
\newcommand{\Tr}{{\rm Tr}}
\newcommand{\rank}{{\rm rank}}
\newcommand{\s}{{\mathbf{S}}}
\newcommand{\B}{{\mathcal B}}
\newcommand{\Hh}{{\mathcal H}}
\newcommand{\V}{{\mathcal V}}
\newcommand{\D}{{\mathcal D}}
\newcommand{\TT}{{\mathcal T}}
\newcommand{\eps}{\varepsilon}
\newcommand{\half}{\mbox{$\frac{1}{2}$}}
\date{\small\version}
\begin{document}
\markboth{\scriptsize{BNT  \version}}{\scriptsize{BNT
\version}}

\title{
\vspace{-80pt}
\begin{flushright}
{\small UWThPh-2007-25} \vspace{30pt}
\end{flushright}
\bf{Analysis of quantum semigroups with GKS--Lindblad generators I.
\newline
Simple generators  }}
\author{\vspace{8pt} Bernhard Baumgartner$^1$, Heide Narnhofer$^2$, Walter Thirring \\
\vspace{-4pt}\small{Fakult\"at f\"ur Physik, Universit\"at Wien}\\
\small{Boltzmanngasse 5, A-1090 Vienna, Austria}}

\maketitle


\begin{abstract}
Semigroups describing the time evolution of open quantum systems in finite dimensional spaces have generators
of a special form, known as Lindblad generators.
The simple generators, characterized by only one operator, are analyzed.
The complete set of all the stationary states is presented in detail,
including a formula to calculate a stationary state from the generating operator.
Also the opposite task can be fulfilled, to construct an evolution leading
to a prescribed stationary state.
\\[10ex]
PACS numbers: \qquad  03.65.Yz , \quad 05.40.-a , \quad 42.50.Dv ,  \quad 03.65.Fd
\\[3ex]
Keywords: open system, time evolution, Lindblad generator, semigroup

\end{abstract}

\footnotetext[1]{\texttt{Bernhard.Baumgartner@univie.ac.at}}
\footnotetext[2]{\texttt{Heide.Narnhofer@univie.ac.at}}



\section{Introduction}\label{intro}

Complete positivity is an essential
property of maps acting on states of open quantum systems, representing their change in time.
It was in the 1970s, starting with \cite{K71}, when this has been discovered.
In the sequel the generators of semigroups formed by maps with this property
have been identified for finite dimensional systems,
and also for infinite systems, provided a strong continuity property is assumed in addition to complete positivity,
\cite {GKS76, L76}; they are now known as ``Lindblad generators''.
The interest at that time has been mainly in connection with Non Equilibrium Thermodynamics,
see \cite {D76, S80, AF01, BP02} and references therein.

New interest has arisen recently in connection with quantum engineering of small systems,
and it seems necessary to enlarge our knowledge about such semigroups.
So we proceed in the analysis of these completely positive semigroups, acting on states
for systems with \emph{finite dimensional} Hilbert spaces.
We are interested in characterizing both the set of invariant states
and the paths of changing states.
The early results on invariance stated in \cite{S76,F78,S80} concern
systems under special conditions on the set of generators.
We give more general detailed insight, not needing these restricting conditions.
Pioneering work has been done by K. Dietz  \cite{D03,D04,D05},
identifying ``superselection sectors'', discussing the change in time of entropy
and presenting a formula which relates a stationary state to the operator
which appears in a simple generator.
In the present paper his findings are imbedded into a complete mathematical characterization.
In addition we shed some light on the approach to equilibrium,
on the geometry of the paths in the set of states.

There are three types of physical processes one wants to describe:
Decay, dissipation and decoherence.
Decay of excited states leads -- in an idealized model -- to a complete
annihilation of some subspaces of the Hilbert space.
Dissipation, on the other hand, leads to a spreading of the state over a large space,
getting more mixed, in case it started as a pure state.
It is somehow the counterpart to decay and also to conservation laws.
In reality and also in our studies both of these effects occur simultaneously,
and conservation laws may impose some structuring.
Decoherence in a wide sense is dephasing;
in a preferred basis the off-diagonal elements of the density matrix are diminishing
and finally vanishing.
Each type of these processes has its mathematical expression in properties of
the quantum dynamical semigroup's generator.

In this paper we concentrate on simple generators, the building blocks for
generators of general completely positive semigroups
- a characterization of Complete Positivity is presented in the Appendix.
For two dimensional systems, a.k.a. qubits, we present a detailed description.
In a companion paper we extend our interest to the general quantum dynamical semigroups
on finite dimensional Hilbert spaces.

Our starting point, the connection with the earlier studies,
is the result of \cite {GKS76, L76}:

\begin{proposition}{\bf Generators of semigroups:}\label{lindbladprop}
Every generator of a semigroup of completely positive trace preserving maps
$\TT^t:\, \rho(s)\mapsto\rho(s+t)$ for $t\geq 0$
on the set of finite dimensional density matrices $\rho$, can be written in the form
\beq\label{lindblad}
\dot{\rho}=\D(\rho)= -i[H,\rho]  +\sum_\alpha \D_{h^\alpha}(\rho), \qquad \textrm{where}\quad H=H^\dag,
\eeq

where the dissipative parts, the $\D_{h^\alpha}$, are:
\end{proposition}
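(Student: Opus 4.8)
The plan is to turn everything into linear algebra on the space of superoperators, exploiting finite dimensionality. First I would observe that on a finite-dimensional matrix space the semigroup $\TT^t$ is norm continuous, so its generator $\D$ is a bounded linear map with $\TT^t=e^{t\D}$ and differentiation at $t=0$ is unproblematic. Next I would fix an orthonormal basis $\{F_j\}_{j=0}^{n^2-1}$ of the matrix algebra with respect to the Hilbert--Schmidt inner product $\langle A,B\rangle=\Tr(A^\dagger B)$, choosing $F_0=\one/\sqrt{n}$ and all remaining $F_j$ traceless. Since the $n^4$ superoperators $\rho\mapsto F_i\rho F_j^\dagger$ are linearly independent and hence form a basis of the space of linear maps on matrices, $\D$ has a unique expansion $\D(\rho)=\sum_{ij}c_{ij}F_i\rho F_j^\dagger$ for some matrix $c=(c_{ij})$. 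Because each $\TT^t$, and therefore $\D$, preserves Hermiticity, the matrix $c$ must be Hermitian.

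The heart of the argument is to extract positivity. I would split off the index $0$ and write $\D(\rho)=G\rho+\rho G^\dagger+\sum_{i,j\geq 1}a_{ij}F_i\rho F_j^\dagger$, where $G$ collects the $c_{00}$ term together with the $F_0$-row and $F_0$-column contributions, and $a=(a_{ij})_{i,j\geq 1}$ is the traceless block of $c$. The key claim is $a\geq 0$. The cleanest route I would take is through the Choi matrix $C(t)=(\mathrm{id}\otimes\TT^t)(P_\Omega)$ built from the maximally entangled vector $\Omega$: complete positivity of $\TT^t$ is equivalent to $C(t)\geq 0$, and since $C(0)=P_\Omega$ is the rank-one projector onto $\Omega$ while $C(t)\geq 0$ for every $t\geq 0$, the one-sided derivative $\dot C(0)$ is forced to be positive semidefinite on the orthogonal complement of $\Omega$. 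Translating this ``conditional'' positivity back through the operator basis yields exactly $a\geq 0$. Converting the one-sided positivity of the derivative of a positive-semidefinite-valued curve into a clean matrix inequality is the step I expect to be the main obstacle, and it is where the finite-dimensional hypothesis and the special choice $F_0\propto\one$ really pay off.

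Once $a\geq 0$ is in hand the rest is algebraic. I would impose trace preservation, $\Tr\D(\rho)=0$ for all $\rho$, which after using cyclicity of the trace becomes the operator identity $G+G^\dagger+\sum_{i,j\geq 1}a_{ij}F_j^\dagger F_i=0$. Solving for the anti-Hermitian part of $G$ lets me write $G=-iH-\half\sum_{ij}a_{ij}F_j^\dagger F_i$ with $H=H^\dagger$, so that $\D(\rho)=-i[H,\rho]+\sum_{i,j\geq 1}a_{ij}\bigl(F_i\rho F_j^\dagger-\half\{F_j^\dagger F_i,\rho\}\bigr)$. Finally I would diagonalize the positive semidefinite matrix $a=U\,\mathrm{diag}(\gamma_\alpha)\,U^\dagger$ with $\gamma_\alpha\geq 0$ and set $h^\alpha=\sqrt{\gamma_\alpha}\sum_i U_{i\alpha}F_i$; this collapses the double sum into a single sum and produces the dissipative parts $\D_{h^\alpha}(\rho)=h^\alpha\rho\,(h^\alpha)^\dagger-\half\{(h^\alpha)^\dagger h^\alpha,\rho\}$, which is precisely the stated form of the generator.
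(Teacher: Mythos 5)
Your proposal is correct, but there is nothing in the paper itself to compare it against: Proposition \ref{lindbladprop} is stated as the known GKS--Lindblad theorem, with the proof delegated to the cited references \cite{GKS76,L76}; the paper takes it as its starting point and proves nothing here. What you have written is, in essence, a faithful reconstruction of the original finite-dimensional argument of Gorini, Kossakowski and Sudarshan: expand $\D$ in an orthonormal operator basis with $F_0=\one/\sqrt{n}$ and the rest traceless, deduce Hermiticity of the coefficient matrix $c$ from Hermiticity preservation, prove $a\geq 0$ for the traceless block via conditional complete positivity, use trace preservation to solve for the anti-Hermitian part of $G$ (producing $H=H^\dag$), and diagonalize $a$ to collapse the double sum into simple generators $\D_{h^\alpha}$; all of these steps are sound, and the diagonalization bookkeeping at the end checks out. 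Two remarks. First, the step you flag as the main obstacle is in fact immediate: for any $\psi\perp\Omega$ the scalar function $f(t)=\langle\psi|C(t)|\psi\rangle$ satisfies $f(t)\geq 0$ and $f(0)=0$, so $f'(0^{+})\geq 0$; since the vectors $w_i=(\one\otimes F_i)\Omega$ with $i\geq 1$ are mutually orthogonal and orthogonal to $\Omega$ (this is precisely what the choice $F_0\propto\one$, $\Tr F_i=0$ buys), and the $G$-terms of $\dot{C}(0)$ are annihilated by compression to $\Omega^{\perp}$ because they carry a factor $|\Omega\rangle\langle\Omega|$, the compressed derivative equals $\sum_{i,j\geq 1}a_{ij}|w_i\rangle\langle w_j|$ and $a\geq 0$ follows at once. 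Second, norm continuity of $\TT^t$ is not a consequence of the algebraic semigroup law alone, even in finite dimension (there are nonmeasurable solutions of Cauchy's functional equation), so strictly you need continuity in $t$ as a hypothesis; since the proposition speaks of ``every generator of a semigroup'', differentiability is presupposed, and this is a presentational caveat rather than a gap in your argument.
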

\begin{definition}{\bf Simple generators:}\label{simplegenerator}
\beq
\D_h (\rho)=h\rho h^\dag -\half (h^\dag h\rho+\rho h^\dag h)
\eeq
\end{definition}

The commutator with a Hamiltonian alone generates a unitary time evolution, which is well known in principle.
We concentrate on
the other building blocks, the simple generators $\D_h$
and the evolutions they generate.

It is well known that the division of $\D$ into a sum of several simple generators is not unique.
Different sets of operators $h^\alpha$ can be attributed to a given generator $\D$.
The attribution of a single $h$ to a simple generator on the other hand is mostly unique, with few exceptions.
The method of proof for this statement uses a structuring of the Hilbert space
and of the operator $h$, which reflects some characteristic properties of the simple $\D$.
These methods are presented in Section \ref{structure},
At the end of the same Section the mentioned uniqueness is stated in the Lemma \ref{attribute}
and then proved.

Our analysis starts with studying the decompositions of $h$,
and their relations to the action of the generator.
It turns out that the time evolution of $\rho$ in the
convex set $\s$ of density matrices can be analyzed
by studying the action of $\D$ on the boundary of $\s$, which is the set of
density matrices with rank less than the dimension of the Hilbert space.
The structure of the operator $h$ and the corresponding structuring of the
Hilbert space are here of prime importance.
Only a density matrix the support of which is an eigenspace of $h$ does not dissipate
in first order of $t$ into higher dimensional subspaces.
Therefore we call these faces of $\s$ which are formed by the $\rho$
with support being an eigenspace or a generalized eigenspace of $h$
the ``lazy faces''.
An even stronger condition marks certain eigenspaces and the corresponding
faces as enclosures. No path $\rho(t)$ leads into or out of them,
some conservation laws are valid.
A closer look on the action of $\D_h$ on density matrices
in special faces lets us determine precisely the complete set of stable, ``stationary'', states.
These results are presented in Theorem \ref{simplestat}.
A note on naming: In the literature the characterization of a state as ``invariant'', ``stable''
or ``stationary'' is used, each one meaning $\dot{\rho}=0$.
In this paper we favor ``stationary''.
And we are sloppy in not differentiating ``state'' and ``density matrix''.

Further analysis, making a detour using the dual time evolution of the operators
and the Kadison inequality, leads to the insight that there are no
periodic evolutions, no circular paths for the $\rho(t)$.
All paths in the space $\s$ lead to the stationary states.
The entropy of the states along a path may increase or decrease
and may even pass through several local maxima and minima.

Some examples, most of them with qubits, demonstrate the findings
of our analysis.

At various places we have to do calculations concerning a splitting
of the Hilbert space into two orthogonal subspaces.
The elaborated formulas are collected in Appendix \ref{appblock}.

\section{The set of generators}\label{genset}

\subsection{Basic properties of the generator's actions}\label{basic}

By inspection of the definition (\ref{simplegenerator}) it is obvious that
the semigroups generated by the superoperators $\D_h$ preserve the trace and selfadjointness of matrices.
\beq
\frac d{dt}\Tr\rho(t)=\Tr[\D(\rho(t))]=0, \qquad \quad \frac d{dt}\sigma^\dag(t)=\D(\sigma^\dag(t))=[\D(\sigma(t))]^\dag.
\eeq
To see that they also preserve positivity, we look at the parts of the simple $\D_h$ separately.
The first part, mapping a positive $\rho$ to a non-negative operator $h\rho h^\dag$,
causes an increase of $\rho$.
The second part causes a decrease of the eigenvalues.
The argument goes as in proving the Feynman-Hellmann theorem and gives:

\begin{proposition}{\bf A differential inequality:}\label{inequality}
If $r(t)$ is a non-negative eigenvalue of $\rho(t)$, its change in time is bounded from below
as
\beq
\dot{r}(t)\geq - \| h\|^2 r(t).
\eeq
\end{proposition}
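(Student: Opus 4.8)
The plan is to follow the Feynman--Hellmann strategy suggested right before the statement. First I would fix a time $t$ at which $r(t)$ is a simple eigenvalue of $\rho(t)$ and choose a normalized eigenvector $\psi(t)$, so that $\rho\psi=r\psi$ and $\langle\psi,\psi\rangle=1$. Differentiating the eigenvalue relation and taking the inner product with $\psi$, the terms containing $\dot\psi$ cancel by virtue of the normalization together with the self-adjointness of $\rho$, which yields the Feynman--Hellmann identity $\dot r=\langle\psi,\dot\rho\,\psi\rangle=\langle\psi,\D_h(\rho)\,\psi\rangle$. This reduces the estimate on $\dot r$ to a single quadratic form.

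Next I would substitute the definition of $\D_h$ and handle its three terms separately. For the gain term I set $\phi=h^\dag\psi$ and note that $\langle\psi,h\rho h^\dag\psi\rangle=\langle\phi,\rho\,\phi\rangle\ge 0$, since $\rho\ge 0$; this is precisely the ``increase'' of the eigenvalue discussed in the text. For the two loss terms I use $\rho\psi=r\psi$ together with $\rho=\rho^\dag$: in $-\half\langle\psi,h^\dag h\rho\,\psi\rangle$ the factor $\rho$ acts directly on $\psi$, while in $-\half\langle\psi,\rho\,h^\dag h\psi\rangle$ it is moved onto the left $\psi$ by self-adjointness, so that each of the two contributes $-\frac{r}{2}\|h\psi\|^2$. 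Collecting the three pieces gives
\[
\dot r=\langle\phi,\rho\,\phi\rangle-r\,\|h\psi\|^2\ge -r\,\|h\psi\|^2\ge -\|h\|^2\,r,
\]
where the first inequality drops the non-negative gain term and the last one uses $\|h\psi\|\le\|h\|$ (because $\|\psi\|=1$) together with $r\ge 0$. This is exactly the asserted bound.

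The point requiring care---and the main obstacle---is the differentiability of the eigenvalue branch $r(t)$, since the Feynman--Hellmann identity in the form above presupposes that $r(t)$ is simple and that $\psi(t)$ can be chosen to depend smoothly on $t$. These hypotheses hold at generic times, and the estimate is valid there. At eigenvalue crossings $\psi(t)$ need not be differentiable, so I would interpret $\dot r(t)$ as a lower Dini derivative at such instants and argue that the min--max characterization of the ordered eigenvalues of the smooth self-adjoint family $\rho(t)$ produces the same lower bound; thus the inequality extends across crossings and holds for all $t\ge 0$. With that caveat addressed, the computation above is short and the remaining steps are routine.
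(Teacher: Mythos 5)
Your proof is correct and follows essentially the same Feynman--Hellmann computation as the paper: differentiate $\rho(t)\psi(t)=r(t)\psi(t)$, use $\dot r=\langle\psi|\dot\rho|\psi\rangle$, drop the non-negative gain term $\langle h^\dag\psi|\rho|h^\dag\psi\rangle\geq 0$, and bound the two loss terms by $-\langle\psi|h^\dag h|\psi\rangle\,r\geq-\|h\|^2 r$. The only divergence is your treatment of eigenvalue crossings: the paper disposes of the differentiability worry in one stroke by invoking Rellich's theorem --- since $\rho(t)=e^{t\D}\rho(0)$ is an analytic self-adjoint family, the eigenvalue branches and eigenvectors can be chosen analytic in $t$ even through degeneracies --- so your fallback via lower Dini derivatives and the min--max characterization, while workable in principle, is an unnecessary (and, as written, unproved) detour; simply citing analytic perturbation theory closes the argument for all $t$.
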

\begin{proof}
Differentiating the eigenvalue equation
$\rho(t)\psi(t)=r(t)\psi(t)$,
-- Rellich's theorem guarantees that all terms are analytic functions, --
and using $\langle\psi|\dot{\psi}\rangle=0$
gives \beqa
\dot{r}(t)=\langle\psi|\dot{\rho}|\psi\rangle &=&
\langle\psi|(h \rho h^\dag -
\half(h^\dag h\rho+\rho h^\dag h))|\psi\rangle \nonumber\\
&\geq& -\half\langle\psi|(h^\dag h\rho+\rho h^\dag h)|\psi\rangle \nonumber\\
&=&-\langle\psi|h^\dag h|\psi\rangle r(t).
\eeqa
\end{proof}
This implies
that the positive eigenvalues $r_j$ of the density operator obey in the course of time the inequality
\beq\label{edecrease}
r_j(t)\geq\exp\left(- \| h\|^2 t\right)\cdot r_j(0)>0,
\eeq
and positive eigenvalues stay positive.
By continuity of the time evolution, zero eigenvalues can not get negative,
the positivity of density matrices is preserved.

So the evolutions do not leave $\s$. Moreover, there exist stationary states,
i.e. states which do not change in the course of time.
\begin{proposition}{\bf Existence of stationary states.}\label{existence}
For each simple generator $\D_h$ there exists a stationary state
$\rho_\infty$ satisfying $\D_h(\rho_\infty)=0$.
If $h$ has zero as an eigenvector, each state with support in the zero subspace
is stationary.
If $h$ does not have zero as an eigenvalue, one may construct a stationary state by
\beq
\rho_\infty=(h^\dag h)^{-1}/\Tr[(h^\dag h)^{-1}].
\eeq
\end{proposition}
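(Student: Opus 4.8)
The plan is to dispose of existence by the clean dichotomy that $h$ either has $0$ as an eigenvalue or does not; in finite dimensions these two alternatives are mutually exclusive and jointly exhaustive, so exhibiting an explicit stationary state in each case proves the claim (and simultaneously yields the stated formula). An abstract compactness/fixed-point argument on the invariant convex set $\s$ would also give existence, but the constructive route is more informative and delivers the formula for free.

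For the first case, suppose $\ker h \neq \{0\}$ and let $\rho$ be any density matrix supported in $\ker h$, i.e. $\rho = P\rho P$ with $P$ the orthogonal projection onto $\ker h$. The point is that $hP = 0$ forces $h\rho = 0$, and by self-adjointness of $\rho$ also $\rho h^\dag = 0$; feeding this into the definition of $\D_h$ makes all three terms $h\rho h^\dag$, $h^\dag h\rho$, $\rho h^\dag h$ vanish, so $\D_h(\rho) = 0$. This gives the second assertion outright, and in particular produces a stationary state whenever $h$ is singular.

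For the complementary case, $h$ is injective, hence invertible, so $h^\dag h$ is positive definite and $\rho_\infty = (h^\dag h)^{-1}/\Tr[(h^\dag h)^{-1}]$ is a genuine density matrix (positive, unit trace). By linearity it is enough to check $\D_h((h^\dag h)^{-1}) = 0$. The anticommutator part collapses to $\one$, since $h^\dag h$ commutes with its own inverse, so $\half(h^\dag h(h^\dag h)^{-1} + (h^\dag h)^{-1}h^\dag h) = \one$; the task then reduces to showing that the gain term $h(h^\dag h)^{-1}h^\dag$ equals $\one$ as well. I would obtain this from invertibility by writing $(h^\dag h)^{-1} = h^{-1}(h^\dag)^{-1}$, after which the factors telescope: $h(h^\dag h)^{-1}h^\dag = (hh^{-1})((h^\dag)^{-1}h^\dag) = \one$, giving $\D_h((h^\dag h)^{-1}) = \one - \one = 0$.

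The one step that genuinely uses the case hypothesis — and hence the main thing to get right — is this identity $h(h^\dag h)^{-1}h^\dag = \one$. It is false for singular $h$, where the analogous expression would be only the projection onto the range of $h$ (and $(h^\dag h)^{-1}$ would not even exist). This is precisely why the kernel and no-kernel cases must be separated rather than merged, and together they complete the proof.
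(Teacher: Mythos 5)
Your proof is correct and takes essentially the same approach as the paper: the same dichotomy on whether $h$ has a zero eigenvalue, the same direct verification that states supported in $\ker h$ are annihilated by $\D_h$, and the same explicit formula in the invertible case. The only cosmetic difference is that where the paper checks $h(h^\dag h)^{-1}h^\dag=\one$ via the polar decomposition $h=U|h|$ with $U$ unitary, you telescope directly using $(h^\dag h)^{-1}=h^{-1}(h^\dag)^{-1}$ --- an equivalent, if anything slightly more elementary, computation (your justification that injectivity implies invertibility correctly uses the finite-dimensional setting).
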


\begin{proof}
If $\rho =\sum_j r_j|\phi_j\rangle\langle\phi_j|$ where $h|\phi_j\rangle=0$, obviously $h\rho=0$,
$\rho h^\dag=0$ and therefore $\D_h(\rho)=0$ hold.
For $h$ without zero as an eigenvector, consider the polar decomposition
$h=U|h|$, where $|h|=\sqrt{h^\dag h}$ and $U=h|h|^{-1}$.
The evolution equation for $\rho_\infty=|h|^{-2}/\Tr[|h|^{-2}]$ is
$$\dot{\rho}_\infty=U|h|\rho_\infty |h|U^\dag-\half(|h|^2\rho_\infty+\rho_\infty |h|^2)=
(U\cdot U^\dag -\one)/\Tr[|h|^{-2}]=0.$$
\end{proof}

This relation between $h$ and $\rho_\infty$ has already been given in \cite{D03}.
The questions of uniqueness and of attraction to paths of other states
are answered in Theorems \ref{simplestat} and \ref{nocirclethm}.

\subsection{Superoperators}\label{superop}

In the sequel it will be useful to endow the linear space of matrices with an inner product
$\langle \sigma|\rho\rangle=\Tr \sigma^\dag \rho$,
and consider matrices as elements of this space, the \emph{Hilbert-Schmidt, HS space}.
In the middle of the set of density matrices $\s$ is the maximally mixed state $\omega=\one/\dim(\Hh)$.
The square distance $\|\rho-\omega\|^2$ is related to the
``Fermi entropy'' $\Tr\rho(\one-\rho)$, on which we make some remarks in Subsection \ref{mixing}.

Consider a generator $\D$ as a
superoperator acting on the
HS space. In this respect the semigroup which it generates can be
extended to a group. For negative $t$ the trace and selfadjointness
of matrices are still preserved, but not the positivity.

 The \emph{adjoint} $\D^\dag$ can be looked
upon as generator of the \emph{dual time evolution} of observables \emph{in the
Heisenberg picture}.
\beq\label{Heisenberg}
\D^\dag_h (f)=h^\dag f h -\half (h^\dag h f+ f h^\dag h).
\eeq
Preservation of the trace of density operators
corresponds to preservation of the unity under the dual time
evolution of observables, the semigroup generated by $\D^\dag$ is ``unital''.
(That both $\D$ and $\D^\dag$ are unital and trace preserving occurs,
if $h$ is a normal operator.)

Diagonalizability of $\D$ is rather exceptional.
Nevertheless there exists for each $\D$ its set of eigenvalues $\{\lambda\}$, equal to the set of zeros
of the characteristic polynomial ${\det}(\D-\lambda\one)$,
a set of proper eigenmatrices $\{\sigma\}$, obeying $\D\sigma=\lambda\sigma$,
and, in some cases of degeneracy, generalized eigenmatrices, obeying
$(\D-\lambda)^n\sigma=0$.
As in any set of linear differential equations, any $\rho(t)$ can be expanded
into a sum of such $\sigma\cdot e^{\lambda t}\cdot {\rm Polynomial}(t)$.
Some general properties of eigenvalues and eigenmatrices are presented in Subsection \ref{nocircle}.
Examples and special cases will be presented for twodimensional systems in Subsection \ref{twobytwo}.

The generators $\D$, defined in Proposition \ref{lindbladprop}, with $H=0$ form a convex cone.
The simple generators make the extremal rays.
Nevertheless closure adds the generators of unitary groups:

\begin{proposition}\label{unitary}
Generators of unitary time evolution can arise as limit of simple Lindbladian generators.
\end{proposition}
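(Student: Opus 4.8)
The plan is to produce the commutator generator $-i[H,\cdot]$, with $H=H^\dag$, as a norm-limit of single simple generators $\D_h$ by choosing $h$ cleverly. The motivating observation is that a simple generator built from a pure multiple of the identity vanishes, $\D_{c\one}=0$, because the gain term $|c|^2\rho$ cancels the loss term exactly. If instead one adds to a \emph{large} identity part a \emph{small} anti-Hermitian correction proportional to $H$, the cancellation of the leading pieces persists, but the surviving cross terms reproduce precisely the commutator, while the genuinely dissipative remainder is pushed to higher order and can be made arbitrarily small.

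Concretely, I would take
\beq
h_\eps=\frac1\eps\,\one-i\,\eps\,H,\qquad\eps>0,
\eeq
and evaluate $\D_{h_\eps}$ directly from Definition \ref{simplegenerator}. Since $\one$ commutes with $\rho$, the $\eps^{-2}$ contributions coming from $h_\eps\rho h_\eps^\dag$ and from $\half(h_\eps^\dag h_\eps\rho+\rho h_\eps^\dag h_\eps)$ cancel identically; the surviving terms of order $\eps^{-1}\cdot\eps=1$ assemble into $-i[H,\rho]$, and the terms of order $\eps^2$ combine, using $H=H^\dag$, into $\eps^2\D_H(\rho)$. The outcome is the clean algebraic identity
\beq
\D_{h_\eps}(\rho)=-i[H,\rho]+\eps^2\,\D_H(\rho).
\eeq

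Granting this identity, the proposition follows at once. On the finite-dimensional HS space $\D_H$ is a fixed bounded superoperator, so $\eps^2\D_H\to 0$ in operator norm and $\D_{h_\eps}\to-i[H,\cdot]$ as $\eps\to0$. As every generator of a unitary time evolution is of the form $-i[H,\cdot]$ with $H=H^\dag$, each such generator lies in the closure of the set of simple generators, which is exactly what the statement asserts.

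The one point deserving care is that $\|h_\eps\|$ grows like $1/\eps$, so the individual terms $h_\eps\rho h_\eps^\dag$ and $h_\eps^\dag h_\eps$ diverge and the bound of Proposition \ref{inequality}, which scales with $\|h\|^2$, becomes vacuous along the family. This is harmless here: the claim concerns convergence of the generators only, not any uniform control on the generated evolutions, and the cancellation of the $\eps^{-2}$ parts is an exact algebraic identity rather than a limiting estimate. I therefore expect the verification of that cancellation to be the sole substantive step, everything else reducing to a one-line norm bound.
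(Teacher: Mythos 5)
Your proposal is correct and follows essentially the same route as the paper, which takes $h(\lambda)=2\lambda^{-1}\one-i\lambda H$ and obtains the identical algebraic identity $\D_{h(\lambda)}(\rho)=-i[H,\rho]+\lambda^2\D_H(\rho)$ before letting $\lambda\to 0$; your verification of the exact cancellation of the $\eps^{-2}$ terms and your remark that $\|h_\eps\|$ diverges while $\D_{h_\eps}$ converges in norm both mirror the paper. If anything, your coefficient $\eps^{-1}$ is the tidier choice: with the paper's prefactor $2\lambda^{-1}$ the cross terms actually produce $-2i[H,\rho]$, a harmless rescaling of $H$ that your normalization avoids.
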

\begin{proof}
Define $h(\lambda):=2\lambda^{-1}\one -i\lambda H$ where $H^\dag=H$. Consider the limit $\lambda \rightarrow 0$:
\beq
\D_{h(\lambda)}(\rho)=-i[H,\rho]+\lambda^2\D_H(\rho) \quad \rightarrow\quad -i[H,\rho]
\eeq
\end{proof}
Note that the norm of $h(\lambda)$ diverges, but $\D_{h(\lambda)}$ converges in norm.

So it is to be expected that a thorough inspection of the extremal rays,
containing the building blocks, the simple generators,
will give a good basis for the analysis of all GKS-Lindblad generators.

\section{Simple generators}

\subsection{Splitting, Decomposition, Structure}\label{structure}

We consider the generators $\D=\D_h$, defined in Definition \ref{simplegenerator}.
To analyze their properties it is convenient
to find structures on two levels.
\emph{The structure in the large} concerns a kind of enclosure:
subspaces of $\Hh$
with the property that the system can neither leave nor enter any of them in the course of time.
(These subspaces are related to the superselection sectors in \cite{D04}.)
$P$ is a projector onto such a subspace if and only if it obeys
\beq
[P,h]=0.
\eeq
Each operator $h$ appearing in a generator $\D_h$ can be decomposed as
\beq
h=\bigoplus_j  h_j
\eeq
with a decomposition of the Hilbert space $\Hh=\bigoplus_j \Hh_j $
into mutually orthogonal subspaces $\Hh_j=P_j\Hh$,
where each $h_j$ acts on $\Hh_j$ and is not further decomposable.

Aim and purpose of this decomposition in the large is to reduce the analysis of
the superoperator's  action on any $\rho$ to the analysis to the split action
on the blocks $P_j\rho P_j$ and  $P_j\rho P_\ell$.
It is:
\beq\label{splitaction1}
\D_h\left(P_j\rho P_j\right)= h_j\rho h_j^\dag-\half[h_j^\dag h_j\rho-\rho h_j^\dag h_j]
=\D_{h_j}\left(P_j\rho P_j\right),
\eeq
for the diagonal blocks in the decomposition of $\rho$ into block matrices $P_j\rho P_\ell$.
The off-diagonal blocks give the phase relations between the enclosure-like subspaces.
\beq\label{splitaction2}
\D_h\left(P_j\rho P_\ell\right)= h_j\rho h_\ell^\dag-\half[h_j^\dag h_j\rho-\rho h_\ell^\dag h_\ell].
\eeq

In case of an appearance of equivalent parts $h_j$
there is the possibility of \emph{unitary reshuffling} the corresponding subspaces:
Collect the set of equivalent parts as $h_1 \ldots h_N$, i.e.
$$\forall\{j,\ell\}\subset \{1\ldots N\}\quad \exists\, \textrm{unitary}\, U_{j\ell}: h_j=U_{j\ell}h_\ell U^\dag_{j\ell}.$$
Represent their sum $h_1\oplus\ldots\oplus h_N$ as a tensor product
$\one_N\otimes h_1$ acting in $\C^N\otimes \Hh_1 $.
The action of a unitary operator $V\oplus\one$ in this tensorial representation gives
a unitary reshuffling of eigenspaces $\Hh_j$.

\begin{proposition}{\bf Splitting of operators and of spaces:}\label{simplesplit}
The splitting $h=\bigoplus_j  h_j$
with a set of orthogonal projectors $P_j$ which commute with $h$ is unique, except for a unitary reshuffling
of eigenspaces $\Hh_j=P_j\Hh$ on which equivalent parts $h_j$ are acting.

The action of the generator splits as
\beq
\D_h\left(P_j\rho P_\ell\right)= P_j\D_{h}(\rho)P_\ell.
\eeq

\end{proposition}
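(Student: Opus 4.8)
The plan is to treat the two assertions separately, beginning with the (easier) splitting of the generator's action. First I would record that a self-adjoint projector $P$ with $[P,h]=0$ automatically satisfies $[P,h^\dag]=0$ as well: taking the adjoint of $[P,h]=0$ and using $P^\dag=P$ gives $[h^\dag,P]=0$. Hence every $P_j$ commutes with $h$, $h^\dag$ and $h^\dag h$ simultaneously. Granting this, the identity $\D_h(P_j\rho P_\ell)=P_j\D_h(\rho)P_\ell$ is immediate: I apply $P_j(\cdot)P_\ell$ to each of the three terms $h\rho h^\dag$, $h^\dag h\rho$ and $\rho h^\dag h$ in the definition of $\D_h$ and move the projectors through $h$, $h^\dag$ and $h^\dag h$ using these commutation relations, recovering exactly $\D_h(P_j\rho P_\ell)$. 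The earlier split actions (\ref{splitaction1}) and (\ref{splitaction2}) are then special cases.

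For the uniqueness statement the decisive observation, again via the adjoint trick, is that the projectors commuting with $h$ are precisely the projections of the $*$-algebra $\mathcal{A}:=\{h,h^\dag\}'$, the commutant of $h$ and $h^\dag$. Because $\Hh$ is finite dimensional, $\mathcal{A}$ is a von Neumann algebra whose structure is completely known: it is a direct sum of full matrix algebras, inducing an orthogonal decomposition $\Hh=\bigoplus_k\C^{n_k}\otimes\C^{m_k}$ with $\mathcal{A}=\bigoplus_k M_{n_k}\otimes\one_{m_k}$ and, correspondingly, $h=\bigoplus_k\one_{n_k}\otimes h^{(k)}$. A block $h_j$ is indecomposable exactly when its projector $P_j$ is a minimal (atomic) projection of $\mathcal{A}$, so a splitting as in the statement is nothing but a maximal orthogonal family of atoms of $\mathcal{A}$ summing to $\one$.

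It then remains to match the freedom in choosing such a family with the announced reshuffling. Inside the $k$-th summand the atoms are the projectors $|\xi\rangle\langle\xi|\otimes\one_{m_k}$ with $\xi\in\C^{n_k}$ a unit vector, and the restriction of $h$ to the range of such an atom equals $h^{(k)}$ regardless of $\xi$; thus all blocks belonging to one summand are mutually unitarily equivalent. For the converse I would show that unitarily equivalent indecomposable blocks must lie in the same summand, by forming from an intertwining unitary $u$ the off-diagonal partial isometry and checking directly that it commutes with $h$, hence belongs to $\mathcal{A}$ and links the two atoms. Consequently every admissible family has the same number $n_k$ of atoms in each summand, and any two such families are related by a unitary $W=\bigoplus_k V_k\otimes\one_{m_k}\in\mathcal{A}$, i.e. by independent changes of orthonormal basis in the multiplicity spaces $\C^{n_k}$. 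Since $W\in\mathcal{A}$ commutes with $h$, conjugation by $W$ fixes $h$ while carrying one family to the other; this is exactly the reshuffling $V\otimes\one$ of the equivalent eigenspaces described just before the proposition. I expect the main obstacle to be precisely this last identification: proving in both directions that unitary equivalence of the blocks corresponds to Murray--von Neumann equivalence of the atoms in $\mathcal{A}$, and that the implementing unitary has the reshuffling form. Once the matrix-algebra structure of $\mathcal{A}$ is in hand, everything else is bookkeeping.
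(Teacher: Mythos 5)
Your proposal is correct, but it takes a genuinely different route from the paper's. The paper argues by hand: it builds a maximal family of mutually commuting projectors commuting with $h$ by inductively refining any given splitting, then collects equivalent indecomposable parts into the tensor form $\one_{n(j)}\otimes h_{j_1}$ and simply \emph{asserts} that every admissible family of projectors arises as projectors onto subspaces $\V\otimes\Hh_{j_1}$ and that ``$h$ commutes with no other projector''; the split action is declared obvious. You instead identify the relevant projectors as exactly the projections of the finite-dimensional von Neumann algebra $\mathcal{A}=\{h,h^\dag\}'$ (via the adjoint trick, which the paper never makes explicit), invoke the Wedderburn structure $\mathcal{A}=\bigoplus_k M_{n_k}\otimes\one_{m_k}$, $h=\bigoplus_k\one_{n_k}\otimes h^{(k)}$, match indecomposable blocks with atoms of $\mathcal{A}$, and show that any two admissible families are conjugate by a unitary $W\in\mathcal{A}$, which fixes $h$ and realizes precisely the announced reshuffling. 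What this buys is a complete and rigorous justification of the uniqueness claim at exactly the point where the paper is weakest (its ``no other projector'' assertion is your bookkeeping with atoms and multiplicity spaces), at the cost of importing the structure theory of finite-dimensional $*$-algebras, whereas the paper's induction stays entirely elementary. One small step you should make explicit: for the off-diagonal partial isometry $w$ built from the intertwiner $u$, commuting with $h$ alone does \emph{not} place a non-self-adjoint operator in $\mathcal{A}$ (e.g.\ a nonnormal $h$ commutes with itself but not with $h^\dag$); you also need $[w,h^\dag]=0$, which does follow because $u^\dag$ intertwines $h_j$ with $h_\ell$, so $[w^\dag,h]=0$ as well, and taking the adjoint of that relation gives $[w,h^\dag]=0$. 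With that line added, your unitary-equivalence-equals-Murray--von-Neumann-equivalence step is airtight, and the rest of the argument, including the three-term commutation check for $\D_h(P_j\rho P_\ell)=P_j\D_h(\rho)P_\ell$, is sound.
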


\begin{proof}
By induction, one may find a maximal set of mutually commuting orthogonal projectors $P_j$,
all commuting with $h$:
If there exists one $P$ commuting with $h$, one may start with
the set $\{P,\, \one-P\}$. Then one searches for a further splitting $P=P_2+P_3$,
the $P_j$ commuting with $h$. If such a further splitting exists, one replaces $P$ by $P_2$ and $P_3$.
The same for $\one-P$, and then further splittings of the $P_j$, and so on.
 At the end of this procedure, one
has the splitting $\Hh=\bigoplus_jP_j \Hh$, and
$h=\bigoplus_j P_jh P_j$, which can not further be refined.

Now collect piecewise all the mutually equivalent parts $h_{j_1}, ... h_{j_{n(j)}}$ as
$\one_{n(j)}\otimes h_{j_1}$ acting in $\C^{n(j)}\otimes \Hh_{j_1} $.
All the possible families $\{P_j\}$ of projectors can be observed in this way:
The operator $h$ commutes with each projector onto $\V\otimes\Hh_{j_1}$,
where $\V$ is any subspace of $ \C^{n(j)}$, and $h$ commutes with no other projector.

The split action of the parts on diagonal and off-diagonal blocks of $\rho$, stated in the equations
(\ref{splitaction1}) and (\ref{splitaction2}), is obvious.
\end{proof}
This decomposition is an important step in the analysis of the time evolution.
An algebraists point of view is that the projectors commuting with $h$
create a subalgebra of conserved observables,
$\D^\dag(P_j)=0$. The law of conservation implies that
the probability for the system to remain in an enclosing subspace $P_j\Hh$ is constant in time.
The evolution inside the subspace is determined by the action of $\D_{h_j}$.
Studying the total action of $D_h$ consists of determining the action of single $\D_{h_j}$
with indecomposable $h_j$, and studying the evolution of the phase relations
between different subspaces $\Hh_j$.
The indecomposable parts acting on diagonal blocks of $\rho$
are studied in Sections \ref{boundary} and \ref{invariant},
The evolution of the phase relations,
represented in the off-diagonal blocks $P_j\rho P_\ell$, is discussed in
Sections \ref{invariant} and \ref{phases}.

The \emph{structure on the second level} is related
to representions of an indecomposable $h$
- which may enter in the large as an indecomposable part $h_j$ - in a \emph{Schur triangulated form},
as an upper triangular matrix, see f.e. \cite{L69}:
On the diagonal are the eigenvalues in arbitrary sequence,
below the diagonal there are only zeros. An indecomposable $h$ of dimension greater than $1$ is not diagonalizable,
one has to be aware of the generalized concepts:
The eigenspaces and generalized eigenspaces are subspaces $P\Hh$ with projectors $P$,
with the characterizing property
\beq\label{eigen}
hP=PhP.
\eeq
These $P$ are orthogonal projectors, $P=P^2$; each one-dimensional
$P$ projects onto a proper eigenvector. These eigenvectors are in general not mutually orthogonal.
The eigenspaces and generalized eigenspaces
are spanned by the proper and the generalized eigenvectors, $\psi_{\lambda,j,0}$ and $\psi_{\lambda,j,m}$, of $h$.
$$(h-\lambda)\psi_{\lambda,j,0}=0,\quad(h-\lambda)\psi_{\lambda,j,m}=\psi_{\lambda,j,m-1}$$
Note that a generalized eigenspace containing $\psi_{\lambda,j,m}$ has to contain also all the
$\psi_{\lambda,j,\mu}$ with $\mu<m$.

To be sure about the relevance of the properties of $h$ for the action of $\D_h$
it is helpful to study the uniqueness of attributing $h$ to a given simple generator.
Already here the method of Schur triangulation appears as an important tool.

\begin{lem}\label{attribute}
If it is possible to write a generator $D$ as a simple $\D_h$,
the attribution of $h$ is unique up to a phase factor,
unless $h$ is a ``normal'', i.e. diagonalizable, operator.
\end{lem}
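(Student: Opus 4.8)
The plan is to show that if one simple generator is written in two ways, $\D=\D_h=\D_k$ (I abbreviate $k:=h'$), then $k=e^{i\theta}h$ whenever $h$ is not normal; the easy direction, $\D_{e^{i\theta}h}=\D_h$, is immediate from Definition \ref{simplegenerator} because the phases cancel, so the content lies in the converse. Subtracting the two representations, the requirement $\D_k(\rho)=\D_h(\rho)$ for all $\rho$ is equivalent to the master relation
\[
k\rho k^\dag-h\rho h^\dag=\half\left(A\rho+\rho A\right),\qquad A:=k^\dag k-h^\dag h=A^\dag .
\]
The whole difficulty is to invert this: the right-hand side is the very rigid superoperator $\rho\mapsto\half(A\rho+\rho A)$, and I must show it can absorb the difference of two rank-one ``sandwich'' maps only in a trivial way.

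First I would extract a pointwise statement. Inserting a pure state $\rho=|v\rangle\langle v|$ gives
\[
|kv\rangle\langle kv|-|hv\rangle\langle hv|=\half\left(|Av\rangle\langle v|+|v\rangle\langle Av|\right),
\]
and sandwiching this between vectors $u,u'\perp v$ annihilates the entire right-hand side, leaving $\langle u|kv\rangle\langle kv|u'\rangle=\langle u|hv\rangle\langle hv|u'\rangle$. Thus $P_v^\perp kv$ and $P_v^\perp hv$ determine the same rank-one operator on $v^\perp$, so the components of $kv$ and $hv$ orthogonal to $v$ agree up to a phase. This yields the key local form
\[
kv=c(v)\,hv+d(v)\,v,\qquad |c(v)|=1,
\]
valid for every unit vector $v$ that is \emph{not} an eigenvector of $h$ (only for those is $P_v^\perp hv\neq0$, so that $c(v)$ is determined).

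Next I would globalize the phase. Since $h$ is non-normal it is not scalar, so its eigenvectors fill only a finite union of proper subspaces, and the non-eigenvectors form an open, dense, connected set on which $v\mapsto c(v)$ is continuous (solve the two-term decomposition by Cramer's rule in the continuously varying independent pair $\{hv,v\}$). Linearity of $k$ applied to $v$, $w$ and $v+w$ forces $(c(v+w)-c(v))hv+(c(v+w)-c(w))hw\in\mathrm{span}(v,w)$; writing $h$ in Schur-triangulated form and marching through the basis --- using precisely the nonzero super-diagonal entries that exist \emph{because} $h$ is non-normal --- I would propagate a single value $c$ across a spanning set and conclude $c(v)\equiv c$. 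Then $(k-ch)v=d(v)v$ on a dense set, so by continuity every vector is an eigenvector of $k-ch$, forcing $k-ch=d\,\one$, i.e. $k=c(h+e\one)$ with $e:=\bar c\,d$. I expect this globalization to be the main obstacle: in low dimension the quadruple $\{v,w,hv,hw\}$ can never be independent, so the ``span'' dichotomy degenerates and one must lean on the triangular structure and on connectivity rather than on a naive dimension count.

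Finally I would eliminate the shift. Feeding $k=c(h+e\one)$ back into the master relation, the phase $c$ drops out and a short computation gives $\D_{h+e\one}-\D_h=\half[\,\bar e\,h-e\,h^\dag,\ \cdot\,]$, which must vanish identically; hence $\bar e\,h-e\,h^\dag$ is central and, being anti-Hermitian, equals $i\tau\one$ with $\tau\in\R$. Decomposing $h=X+iY$ into its Hermitian parts, this reads $\mathrm{Re}(e)\,Y-\mathrm{Im}(e)\,X=(\tau/2)\,\one$; if $e\neq0$ it makes $X$ and $Y$ affinely dependent, so $[X,Y]=0$ and $h$ is normal. For non-normal $h$ we therefore must have $e=0$, leaving $k=c\,h=e^{i\theta}h$. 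The same identity $\bar e\,h-e\,h^\dag=i\tau\one$ shows the exceptions are genuine: for Hermitian $h$ any real shift $e$ is admissible, and for scalar $h$ there are no non-eigenvectors at all, so the attribution fails to be phase-unique exactly in the normal case.
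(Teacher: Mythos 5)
Your overall route is genuinely different from the paper's (which fixes a Schur-triangulated basis and inducts on dimension, reading off the blocks of $k$ from the action of $\D$ on specially chosen low-rank $\rho$), and most of it is correct: the master relation, the pure-state evaluation yielding $kv=c(v)\,hv+d(v)\,v$ with $|c(v)|=1$ for non-eigenvectors $v$, and the shift-elimination identity $\D_{h+e\one}-\D_h=\half[\bar e\,h-e\,h^\dag,\,\cdot\,]$ with its Cartesian-decomposition consequence are all sound. The genuine gap is the globalization step, which you yourself flag but never carry out: the mechanism you propose (linearity forcing $(c(v+w)-c(v))hv+(c(v+w)-c(w))hw\in\mathrm{span}(v,w)$, plus triangular ``marching'' and connectivity) cannot establish $c(v)\equiv c$, because in the lowest nontrivial dimension it is empty. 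For $\dim\Hh=2$ and any non-eigenvector $v$, the pair $\{hv,v\}$ already spans $\C^2$, so the span condition holds for \emph{every} linear $k$; and $c(v)$ is then automatically continuous on the connected set of non-eigenvectors, while a continuous unimodular function on a connected set need not be constant. Explicitly, for
\[
h=\left(\begin{array}{cc} 0 & 1\\ 0 & 0\end{array}\right),\qquad
k=\left(\begin{array}{cc} a & b\\ c & d\end{array}\right),\qquad
v=\left(\begin{array}{c} x\\ y\end{array}\right),\quad t=x/y,
\]
one computes $c(v)=-c\,t^{2}+(a-d)\,t+b$, and at this stage the only available constraint is $|c(v)|=1$ for all $t\in\C$; nothing in the span/connectivity apparatus excludes a nonconstant unimodular $c(v)$. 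The entire content of the step lies in the modulus-one condition, which your sketch never actually uses.

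The repair is short and keeps your architecture intact: from $kv\in\mathrm{span}(hv,v)$ and Cramer's rule, $c(v)$ is a \emph{rational} function of the entries of $v$ (holomorphic --- no complex conjugates enter), defined on the open, dense, connected set of non-eigenvectors. Restricting to any complex line, $t\mapsto c(v+tw)$ is a rational function of $t\in\C$ of constant modulus $1$ wherever defined, hence constant by the maximum-modulus (or Liouville) theorem; connectedness then makes $c$ globally constant. In the $2\times2$ example this forces $c=0$, $a=d$, $|b|=1$ at once. With that inserted, your concluding argument --- $k=c(h+e\one)$, then $\bar e\,h-e\,h^\dag=i\tau\one$, then $\mathrm{Re}(e)\,Y-\mathrm{Im}(e)\,X=(\tau/2)\one$, which for $e\neq 0$ gives $[X,Y]=0$ and $h$ normal --- correctly completes the proof, and in a form that is more global and basis-free than the paper's induction; it also exhibits cleanly why the normal case is genuinely exceptional (for Hermitian $h$, any real shift $e$ leaves $\D_h$ unchanged, matching the paper's discussion of diagonal $h$).
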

\begin{proof}
The strategy of the proof is to consider the equations $\dot{\rho} = \D_h(\rho)$
for the given $h$, with a set of properly chosen matrices $\rho$
as determining the operator $k$ which shall give the same  $\dot{\rho}$, by $\dot{\rho} = \D_k(\rho)$.
The condition of non-diagonalizability of the given $h$ can hold only for $n=dim(\Hh)\geq 2$.
The proper choice of $\rho$ requires a proper choice of the basis vectors,
giving a triangular matrix representation of $h$
and enabling a proof by induction on the rank of $\rho$.
If the given $h$ has a generalized eigenvector $\psi_{\lambda,1}$, choose the basis for its
decomposition and its Schur triangulation in
such a way that $\psi_{\lambda,0}$ is the first basis vector and
$\psi_{\lambda,1}$ is the second one.
If $h$ has only proper eigenvectors, then, because
of its non-diagonalizability, there must be a non-orthogonal pair of them,
$\psi_{\lambda,0}$ and $\psi_{\mu,0}$.
Choose this pair to form the first two basis vectors: $\psi_{\lambda,0}$ and the normalized
$\psi_{\mu,0} - \gamma\psi_{\lambda,0}$, where
$\gamma=\langle\psi_{\lambda,0}|\psi_{\mu,0}\rangle$.
Then we proceed by induction on the dimension of the space
and use the chosen basis.
We refer to the formulas presented in the Appendix  \ref{appblock}.

For dimension 2
the action of $\D$
on the matrix $\rho=\left( \begin{array}{cc} r & q \\ q^\dag & s \end{array} \right) $
with $q=0$ and $s=0$ determines $c=0$ for the matrix
$k=\left( \begin{array}{cc} a & b \\ c & d \end{array} \right)$.
The action on $\rho$ with $r=q=0$ determines first $|b|$, which is not zero,
due to the proper choice of the basis vectors.
Then via $\dot{q}=a^\dag b r$ it determines $a^\dag b$.
Finally, acting on $\rho$ with $r=q=0$, by
$\dot{q}=( b d^\dag-\half a^\dag b)s$ it determines $b d^\dag$.
So the matrix $k$ is determined up to a phase as $k=h$.

For higher dimensions
use the block matrix form, as in the Appendix \ref{appblock},
with $\rank(A)=\rank(R)=n-1$, where $S=s$ and $D=d$ are numbers, $B=|b\rangle$
and  $Q=|q\rangle$ are vectors.
The action on a $\rho$ with $|q\rangle=0$ and $s=0$ is $\dot{R}=\D_A(R)$ and
-- there is the induction hypothesis -- this determines $A$ up to a phase.
Note that $\dot{R}\neq 0$, so $A\neq a\one$.
The action on this $\rho$ is also $\dot{s}=0$, determining $C=0$
and $|\dot{q}\rangle= RA^\dag|b\rangle$.
Choosing a phase for $A$, the action on $\rho$ with
$R=0$, $|q\rangle=0$ determines $|b\rangle$ via $\dot{R}=|b\rangle s\langle b|$
and via $|\dot{q}\rangle$ from before;
and it determines $d$ via $|\dot{q}\rangle=s(d^\dag |b\rangle -\half A^\dag |b\rangle)$
in case $|b\rangle\neq 0$.
If $|b\rangle=0$, consider the action on a $\rho$ with $|q\rangle\neq 0$:
$|\dot{q}\rangle= d^\dag A|q\rangle -\half[A^\dag A+|d|^2]|q\rangle$.
This is a quadratic equation for the number $d$, but there are at least two different
equations for the components of $|q\rangle$ to determine $d$.
So the proof of uniqueness is completed for nondiagonalizable $h$.

\end{proof}

If $h$ is a diagonal operator,
all the matrix elements of $\rho$ evolve in time independently of each other,
$$\dot{\rho}_{i,j} =[h_i h_j^\dag - \half(|h_i|^2+|h_j|^2)]\rho_{i,j},$$
the diagonal of $\rho$ is invariant in time.
Only diagonal $k$ in $\D_k$ can give the same evolution.
But there are not always enough equations
to determine the matrix elements on the diagonal of $k$.
There remains some freedom in choosing $k$ with
$|k_i-k_j|=|h_i-h_j|$ and $Im(k_i k_j^\dag)=Im(h_i h_j^\dag)$
in special cases, e.g. if there are not more than two different values of $h_j$,
or if all $h_j$ are real.
But the qualitative structure of possible $k$ is fixed:
It is determined that it is a diagonal operator in the same basis,
with the same degeneracies.

\subsection{At the boundary of the set of states}\label{boundary}

It turns out that the time evolution of $\rho$ in the set $\s$ can be analysed
by studying the action of $\D$ on the boundary, the set of
density matrices with rank less than the dimension of the Hilbert space.
Referring to the inequality (\ref{edecrease}), we note
that it implies one of the central observations:

\begin{lem}{\bf No purifying in finite time:}\label{nopure}
The rank of $\rho(t)$ cannot decrease.
\end{lem}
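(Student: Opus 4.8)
The plan is to read the lemma off directly from the differential inequality already established. Since the flow $\rho(t)=e^{t\D_h}\rho(0)$ solves a linear system of ordinary differential equations with constant coefficients, it depends analytically on $t$, and by the discussion following Proposition~\ref{inequality} it preserves both self-adjointness and positivity, so $\rho(t)\ge 0$ for all $t\ge 0$. First I would invoke Rellich's theorem, exactly as in the proof of Proposition~\ref{inequality}, to label the eigenvalues of $\rho(t)$ by analytic branches $r_1(t),\dots,r_n(t)$ with $n=\dim(\Hh)$, together with analytic orthonormal eigenvectors, valid globally even where eigenvalues coincide or cross.

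Next, because each branch is a non-negative eigenvalue of $\rho(t)$, Proposition~\ref{inequality} applies to it and, upon integration, yields the bound (\ref{edecrease}),
\beq
r_j(t)\ge \exp\!\left(-\|h\|^2 t\right)\,r_j(0).
\eeq
Consequently any branch with $r_j(0)>0$ stays strictly positive for all $t\ge 0$. Since the rank of a positive semidefinite matrix equals the number of its strictly positive eigenvalues, the count of positive branches cannot drop, and hence $\rank(\rho(t))\ge\rank(\rho(0))$ for every $t\ge 0$.

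Finally I would upgrade this to the monotonicity statement for the whole trajectory. By the semigroup property the curve starting at any intermediate time $s$ is again a solution of the same evolution, so applying the previous inequality to the initial datum $\rho(s)$ gives $\rank(\rho(s+t))\ge\rank(\rho(s))$ for all $t\ge 0$. Thus the rank is non-decreasing along the evolution, which is precisely the assertion.

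The only delicate point is the bookkeeping of the eigenvalue branches where eigenvalues become degenerate or cross, so that the inequality of Proposition~\ref{inequality} may legitimately be applied to each branch separately; this is exactly what Rellich's analytic parametrization guarantees, and it is the step I would state most carefully. Everything else is an immediate consequence of (\ref{edecrease}) together with the preservation of positivity.
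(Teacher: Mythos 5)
Your proposal is correct and follows essentially the same route as the paper, which proves the lemma simply by referring to the integrated inequality (\ref{edecrease}): positive eigenvalues decay at worst exponentially and so stay positive, hence the rank cannot drop. Your additional care with the Rellich analytic branches and the semigroup restart at intermediate times just spells out details the paper leaves implicit.
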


An observation of almost sure increase of the rank
serves as a companion to this exclusion of a decrease.

\begin{lem}\label{firstorder}
Consider $\rho(t)$ with $\rank(\rho(0))<\dim(\Hh)$.
At least one eigenvalue which is zero for $\rho(0)$ becomes strictly positive
in first order in t, if and only if
the range of $\rho(0)$ is not an eigenspace or a generalized eigenspace of $h$.
\end{lem}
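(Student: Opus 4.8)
The plan is to compute the first-order growth rates of the vanishing eigenvalues of $\rho$ by degenerate perturbation theory and then to decide when they all stay zero. Write $\V$ for the range of $\rho(0)$, of dimension $r=\rank(\rho(0))<n=\dim(\Hh)$, and let $P$ be the orthogonal projector onto the kernel $\V^\perp$; this $\V^\perp$ is exactly the zero-eigenspace of $\rho(0)$, of multiplicity $n-r\geq 1$. By the Rellich analyticity already invoked for Proposition \ref{inequality}, the eigenvalues that branch off from $0$ behave to first order in $t$ like $t\,\mu_i$, where the $\mu_i$ are the eigenvalues of the compressed perturbation $P\,\D_h(\rho(0))\,P$ restricted to $\V^\perp$.

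First I would evaluate this compression. For $\psi,\phi\in\V^\perp$ one has $\rho(0)\phi=0$ and $\langle\psi|\rho(0)=0$, so both contributions $\langle\psi|h^\dag h\rho|\phi\rangle$ and $\langle\psi|\rho h^\dag h|\phi\rangle$ from the second, dissipative part of $\D_h$ vanish identically. Hence on $\V^\perp$ the compressed generator reduces to $P\,h\rho(0)h^\dag\,P$, which is manifestly positive semidefinite. Consequently every branching rate $\mu_i$ is nonnegative -- in agreement with Lemma \ref{nopure} -- and at least one of them is strictly positive if and only if this compression is not the zero operator.

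The remaining step is an algebraic equivalence. Factoring $P\,h\rho h^\dag\,P=(\rho^{1/2}h^\dag P)^\dag(\rho^{1/2}h^\dag P)$, it vanishes iff $\rho^{1/2}h^\dag P=0$, iff $\rho\,h^\dag\psi=0$ for every $\psi\in\V^\perp$, i.e. iff $h^\dag(\V^\perp)\subseteq\V^\perp$. Passing to orthogonal complements, this is equivalent to $h(\V)\subseteq\V$, which by the characterization (\ref{eigen}), $hP=PhP$ for the projector onto $\V$, says precisely that $\V$ is an eigenspace or generalized eigenspace of $h$. Thus no zero eigenvalue grows in first order exactly when $\V$ is invariant under $h$, and the contrapositive is the assertion of the lemma.

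The point I expect to require care is the correct handling of the degeneracy: since the zero eigenvalue has multiplicity $n-r$, its first-order splitting is governed by the full spectrum of the compressed operator $P\,\D_h(\rho)\,P$ on $\V^\perp$, not merely by its diagonal matrix elements, so one must argue at the level of operators rather than of single eigenvectors. Once this is in place, the vanishing of the dissipative part on $\V^\perp$ and the complementation identity $h^\dag(\V^\perp)\subseteq\V^\perp\Leftrightarrow h(\V)\subseteq\V$ are routine.
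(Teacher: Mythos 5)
Your proposal is correct and is essentially the paper's proof in coordinate-free dress: your compressed perturbation $P\,h\rho(0)h^\dag\,P$ on the kernel is exactly the block $C\,R\,C^\dag$ that the paper extracts from the Schur-triangulated block form via (\ref{rankplus}), and your invariance criterion $h(\V)\subseteq\V$ is precisely the paper's condition $C=0$, i.e.\ the characterization (\ref{eigen}). Your explicit operator-level treatment of the degenerate zero eigenvalue and the factorization $P\,h\rho h^\dag\,P=(\rho^{1/2}h^\dag P)^\dag(\rho^{1/2}h^\dag P)$ are a slightly tidier rendering of the paper's appeal to the full rank of $R$, but the route is the same.
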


\begin{proof}
Consider the analytic expansion of $\rho(t)$ and its eigenvalues $r(t)$,
using the eigenvectors related to them, and
consider the terms linear in $t$.
$$\dot{r}(0)=\langle\psi|\dot{\rho}(0)|\psi\rangle$$
Write $h$ in Schur triangulated form, such that the upper left part $R$
(we use here the notation of the Appendix \ref{appblock}) acts
as a matrix of full rank in the range of $\rho(0)$.
Use the block matrix equation (\ref{rankplus}), with $R$ the restriction of $\rho$ to its range, $\phi$
an eigenvector to the eigenvalue zero, so it is not in the range of $R$ and
\beq
\dot{r}(0)=\langle\phi|\dot{S}(0)|\phi\rangle=\langle\phi|C\, R\,C^\dag|\phi\rangle.
\eeq
Observe that, according to (\ref{eigen}), the range of $\rho(0)$ is a (generalized) eigenspace, iff $C=0$.
Since $R$ is of full rank there is at least one $\phi$ giving a strict increase
$\dot{r}>0$, iff $C\neq 0$.
\end{proof}

Note the \emph{correspondence} between \emph{subspaces} of the Hilbert space $\Hh$
and  \emph{faces} of the set of states $\s$: Each face consists of
density matrices whose range is some subspace of $\Hh$.
Since not all subspaces of $\Hh$ can be eigenspaces of a non-constant operator,
the corresponding faces form a zero-set in the boundary of the set of states.
Here comes the structure of the operator $h$ into play.

\begin{definition}\label{lazy}
We define each face which corresponds to an eigenspace
or a generalized eigenspace of $h$ as a \textbf{lazy face}.
In a formal notation: If $P$ is the projector onto the range of $\rho$,
and if $P$ fulfills the property (\ref{eigen}) i.e. $hP=PhP$, then, and only then, is $\rho$ in a lazy face.
\end{definition}
The reason for this definition is Lemma \ref{firstorder},
and its strengthening in the following Lemma \ref{secondorder}.
Eigenvalues $r(t)$ which are zero for $t=0$
do not increase in first, not even in second order in $t$ if $\rho(0)$ is in a lazy face.
They may increase in third or a higher order, or not at all.
Examples will be presented in Subsection \ref{twobytwo}.
\begin{lem}\label{secondorder}
Consider $\rho(t)$ with $\rho(0)$ in a lazy face.
Each eigenvalue $r(t)$ beginning as $r(0)=0$ does not increase
in first and not in second order in $t$.
\end{lem}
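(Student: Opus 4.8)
The plan is to work in exactly the block decomposition used for Lemma \ref{firstorder}. Split $\Hh=\Hh_1\oplus\Hh_2$ with $\Hh_1=\mathrm{ran}\,\rho(0)$ and $P$ the projector onto it, and write, in the notation of Appendix \ref{appblock},
$$\rho=\left(\begin{array}{cc} R & Q \\ Q^\dagger & S\end{array}\right),\qquad h=\left(\begin{array}{cc} A & B \\ C & D\end{array}\right).$$
Being in a lazy face means, by Definition \ref{lazy} and (\ref{eigen}), that $hP=PhP$, i.e.\ $C=0$; the initial data are $R(0)=R_0>0$ on $\Hh_1$, $Q(0)=0$, $S(0)=0$. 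By Lemma \ref{firstorder} the eigenvalues that begin at $r(0)=0$ already have vanishing first derivative, so, since they are eigenvalues of a density matrix and stay non-negative, the whole claim reduces to showing that they carry no term of order $t^2$.

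The reduction step is to control these small eigenvalues by the Schur complement. For $t$ small enough $R(t)$ stays invertible, and one has the congruence factorisation
$$\rho=L^\dagger\left(\begin{array}{cc}R & 0\\ 0 & \Sigma\end{array}\right)L,\qquad \Sigma=S-Q^\dagger R^{-1}Q,\qquad L=\left(\begin{array}{cc}\one & R^{-1}Q\\ 0 & \one\end{array}\right),$$
with $L\to\one$ as $t\to0$. By Ostrowski's theorem the eigenvalues of $\rho$ and of the block-diagonal matrix differ only by positive factors lying between the extreme eigenvalues of $L^\dagger L$, which tend to $1$; as the eigenvalues of $R$ stay bounded away from $0$, the eigenvalues of $\rho(t)$ tending to zero are exactly those of $\Sigma(t)$ up to such factors. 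Hence it suffices to prove $\Sigma(t)=O(t^3)$. Since $\Sigma(0)=0$, and since $\dot\Sigma(0)=0$ because $Q(0)=0$ and $\dot S(0)=0$, the heart of the matter is the vanishing of the $t^2$-coefficient of $\Sigma$.

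I would then Taylor-expand the blocks along $\dot\rho=\D_h(\rho)$, using the block formulas of Appendix \ref{appblock}. Laziness gives $\dot S(0)=CR_0C^\dagger=0$, so $S$ starts at order $t^2$, with $\half\ddot S(0)$ the $(2,2)$-block of $\half\,\D_h(\D_h(\rho(0)))$, while the coherence that develops to first order is $\dot Q(0)=(\D_h(\rho(0)))_{12}=-\half R_0 A^\dagger B$. Substituting $S(t)=\half t^2\ddot S(0)+O(t^3)$, $Q(t)=t\,\dot Q(0)+O(t^2)$ and $R(t)^{-1}=R_0^{-1}+O(t)$ into $\Sigma$ yields
$$\Sigma(t)=t^2\Big(\half\ddot S(0)-\dot Q(0)^\dagger R_0^{-1}\dot Q(0)\Big)+O(t^3).$$
The decisive point is that both terms in the bracket equal $\frac14\,B^\dagger A R_0 A^\dagger B$ and therefore cancel: the direct second-order leakage into $\Hh_2$ is exactly compensated by the feedback of the first-order coherences $Q$ through the Schur correction. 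This gives $\Sigma(t)=O(t^3)$, as required.

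The main obstacle is precisely this cancellation. The only genuinely second-order quantity, the $(2,2)$-block of $\D_h(\D_h(\rho(0)))$, has to be computed after setting $C=0$, and one must recognise that it reproduces $\dot Q(0)^\dagger R_0^{-1}\dot Q(0)$. The remaining ingredient needs care rather than ingenuity: the Schur--Ostrowski reduction is what licenses the passage from ``$\Sigma=O(t^3)$'' to ``the zero-eigenvalues of $\rho(t)$ vanish to second order''. Geometrically the statement expresses that a lazy face is tangent to the boundary of $\s$ to second order.
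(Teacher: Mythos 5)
Your proposal is correct, and it takes a genuinely different route from the paper's. The paper proves the lemma by perturbing the eigenvalue equation $\rho(t)\psi(t)=r(t)\psi(t)$ order by order: the first-order equation yields $\dot r(0)=0$ together with the eigenvector correction $\dot\chi(0)=\half A^\dag B\phi(0)$, and substituting that correction into the second-order equation produces exactly the cancellation you found --- indeed the paper's term $-\half B^\dag A R\,\dot\chi(0)$ is precisely your $-\dot Q(0)^\dag R_0^{-1}\dot Q(0)$, so the two arguments encode the same algebraic identity $\half\ddot S(0)=\dot Q(0)^\dag R_0^{-1}\dot Q(0)=\frac14 B^\dag A R_0 A^\dag B$. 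What your Schur--Ostrowski reduction buys: it replaces eigenvector bookkeeping by the congruence $\rho=L^\dag\,\mathrm{diag}(R,\Sigma)\,L$, controls all initially-zero eigenvalues simultaneously through the single norm bound $\|\Sigma(t)\|=O(t^3)$, avoids any appeal to Rellich-type analyticity of eigenvalue branches (continuity plus the congruence estimates suffice, since $0\le r(t)\le Ct^3$ forces the first and second one-sided derivatives to vanish), and makes the geometric reading --- the lazy face is second-order tangent to the boundary of $\s$ --- explicit; what the paper's version buys is self-containedness (no external matrix-analysis theorem) and direct access to the quantities $\dot r(0)$, $\ddot r(0)$ the lemma speaks about. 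Your matching of small eigenvalues is sound: $R(t)$ stays invertible for small $t$ because $R_0>0$ on the range of $\rho(0)$, the spectrum of $R(t)$ stays bounded away from zero while the Ostrowski factors from $L^\dag L\to\one$ tend to $1$, so the sorted correspondence does identify the vanishing eigenvalues with those of $\Sigma(t)$. One point worth recording: your coefficient $\ddot S(0)=\half B^\dag A R_0 A^\dag B$, i.e.\ $S(t)=\frac14 B^\dag A R_0 A^\dag B\,t^2+O(t^3)$, is the correct one; the paper's displayed expansion (\ref{rhotaylor}) states $S(t)=\half B^\dag ARA^\dag B\,t^2+O(t^3)$, which is off by a factor $2$ --- the paper's own subsequent second-order equation uses $\frac14 B^\dag ARA^\dag B\,\phi(0)$, consistent with your computation, so (\ref{rhotaylor}) contains a harmless typo that your cancellation in fact confirms.
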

\begin{proof}
The eigenvalue equation $\rho(t)\psi(t)=r(t)\psi(t)$ has to hold
in each order of $t$.
Consider the splitting $\psi=\chi\oplus\phi$, with $\chi(0)=0$, the block matrix form
presented in Appendix \ref{appblock}, with $C=0$, $Q(0)=0$, $S(0)=0$, and the Taylor expansions.
For the blocks of $\rho$ we need
\beq\label{rhotaylor}
R(t)=R+O(t),\quad Q(t)=-\half RA^\dag B\,t+O(t^2),\quad S(t)=\half B^\dag ARA^\dag B\,t^2+O(t^3).
\eeq
The first order of the eigenvalue equation, $\dot{\rho}(0)\psi(0)+\rho(0)\dot{\psi}(0)=\dot{r}(0)\psi(0)$,
gives two equations, one for the vector component orthogonal to the range of $\rho(0)$,
$$-\half B^\dag AR\chi(0)=\dot{r}(0)\phi(0)\quad\Rightarrow\quad \dot{r}(0)=0,$$
and one for the component  in the range of $\rho(0)$
$$\quad -\half RA^\dag B\phi(0)+R\dot{\chi}(0)=0\quad\Rightarrow
\quad \dot{\chi}(0)= \half A^\dag B\phi(0).$$
The last implication uses the invertibility of $R$, due to its full rank.
From the second-order equation we need only the part in the subspace orthogonal to the range of $\rho(0)$:\quad
$\half\ddot{\rho}(0)\psi(0)+\dot{\rho}(0)\dot{\psi}(0)+\half\rho(0)\ddot{\psi}(0)=\half\ddot{r}(0)\psi(0)$
gives
$$\frac14 B^\dag ARA^\dag B \phi(0)-\half B^\dag AR\dot{\chi}(0)+0=\half\ddot{r}(0)\phi(0)
 \quad\Rightarrow\quad \ddot{r}(0)=0,$$
since the terms on the left hand side cancel when the identity for $\dot{\chi}(0)$ is used.
\end{proof}

There is a way to geometrically characterize the action of $\D$ on the states in a face.
Define the mid point $\omega_f$ of the face as
$$\omega_f=\frac1n \sum_j|\phi_j\rangle\langle\phi_j|,$$
where the $\phi_j$ form a basis of the $n$-dimensional subspace corresponding to the face.
Consider the projection of the path $\rho(t)$ in the Euclidean space of hermitian matrices
onto the line which connects $\omega_f$ with the
maximally mixed state $\omega$ in the center of $\s$.
This projection gives a special coordinate for the HS vector pointing from $\omega_f$ to $\rho(t)$:
\beq
\Tr[(\rho(t)-\omega_f)(\omega-\omega_f)]=\frac1n\Tr[S(t)].
\eeq
Only for lazy faces does this coordinate not grow linearly in $t$.
It grows quadratically in $t$, see (\ref{rhotaylor}), so this geometrical characterization
is not as strong as the statements on the eigenvalues made in Lemma \ref{secondorder}.

In the next Section we explain in detail how an appearance of zero as an eigenvalue for $h$ has consequences
in the investigations on the set of invariant states.
They can be elements of the corresponding lazy face.
Concerning the dimension of such a face we state as a remark the following
\begin{lem}
An indecomposable operator $h$ of rank $n$ can not have more than
$n/2$ proper eigenvectors to a degenerate eigenvalue.
\end{lem}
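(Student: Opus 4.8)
The plan is to argue by contradiction, manufacturing a nontrivial orthogonal projector that commutes with $h$ and thereby contradicts indecomposability. Write $n=\dim\Hh$ for the dimension of the space on which $h$ acts, let $\lambda$ be the degenerate eigenvalue, and let $m$ be the maximal number of linearly independent proper eigenvectors, so that $m=\dim\ker(h-\lambda\one)$. First I would reduce to the case $\lambda=0$: replacing $h$ by $h-\lambda\one$ changes neither the family of projectors commuting with it, since $[P,h-\lambda\one]=[P,h]$, nor the space of proper eigenvectors belonging to $\lambda$, which is precisely $\ker(h-\lambda\one)$. Hence it suffices to treat an indecomposable $h$ having $0$ as an eigenvalue with $\dim\ker h=m$, and to prove $m\le n/2$.

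The key step is to bring in the adjoint. Since $h$ and $h^\dag$ have equal rank, their kernels have equal dimension, $\dim\ker h=\dim\ker h^\dag=m$. Suppose now, for contradiction, that $m>n/2$. Then $\ker h$ and $\ker h^\dag$ are two subspaces of $\Hh$ each of dimension exceeding $n/2$, so $\dim(\ker h\cap\ker h^\dag)\ge 2m-n>0$ and there is a unit vector $v$ with $hv=0$ and $h^\dag v=0$. For the rank-one orthogonal projector $P=|v\rangle\langle v|$ one computes $hP=(h|v\rangle)\langle v|=0$ and $Ph=|v\rangle(\langle v|h)=|v\rangle\langle h^\dag v|=0$, hence $[P,h]=0$. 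Because the eigenvalue is degenerate we have $m\ge 2$, so $n\ge m\ge 2$ and $P$ is nontrivial, $0\neq P\neq\one$. This contradicts the indecomposability of $h$, and therefore $m\le n/2$.

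I expect the main obstacle to be recognizing that one must work with a subspace that \emph{reduces} $h$, rather than one that is merely invariant. The kernel $\ker h$ is automatically $h$-invariant, yet by itself it yields no commuting projector, because indecomposability is phrased through orthogonal projectors $P$ with $[P,h]=0$, i.e.\ through the absence of a proper subspace invariant under both $h$ and $h^\dag$. The device that resolves this is the equality of the nullities of $h$ and $h^\dag$ together with the dimension count, which forces a common null vector and hence a one-dimensional reducing subspace. A minor point worth recording is that the degeneracy hypothesis is exactly what secures $n\ge 2$, so that the projector produced is genuinely nontrivial; for $n=1$ the bound $m\le n/2$ would fail, but there the single eigenvalue is nondegenerate.
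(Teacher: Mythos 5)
Your proof is correct. It manufactures the same contradiction object as the paper --- a nontrivial orthogonal projector commuting with $h$, coming from a one-dimensional subspace that reduces $h$ --- but by a genuinely different computation. The paper's proof stays inside its block-matrix formalism: it triangulates $h$ with an orthonormal basis of the $m$-dimensional eigenspace as the leading block, so $A=\lambda\one_m$ and $C=0$, and argues that if $m>n-m$ the $m$ rows of the $m\times(n-m)$ block $B$ are linearly dependent, whence a unitary rotation inside the eigenspace makes $m-(n-m)$ rows of $B$ vanish; the corresponding basis vectors are then common eigenvectors of $h$ and $h^\dag$, contradicting indecomposability. You avoid any choice of basis: after shifting to $\lambda=0$ (correctly noting that this changes neither $[P,h]$ nor the eigenspace), you invoke $\dim\ker h=\dim\ker h^\dag$ and the intersection bound $\dim(\ker h\cap\ker h^\dag)\geq 2m-n>0$. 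In fact the two computations locate exactly the same subspace: in the triangulated frame, $\ker(h-\lambda)\cap\ker(h^\dag-\lambda^*)$ is precisely the set of eigenspace vectors annihilated by $B^\dag$, and its dimension bound $2m-n$ matches the paper's count of $m-(n-m)$ zero rows. What your route buys is economy and transparency --- the only inputs are $\rank(h)=\rank(h^\dag)$ and a dimension count, and you make explicit the equivalence ``nontrivial reducing subspace $\Leftrightarrow$ nontrivial commuting projector'' that the paper leaves implicit; what the paper's route buys is consistency with the Schur triangulation machinery used throughout (Appendix~\ref{appblock}), exhibiting concretely which rows of $B$ vanish. Your closing remark, that degeneracy enters only to exclude the trivial $n=1$ counterexample, is a correct refinement not spelled out in the paper.
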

\begin{proof}
Write $h$ in triangulated form and use the $m$ proper eigenvectors to
the degenerate eigenvalue $\lambda$ as the first $m$ basisvectors.
$$h=\left( \begin{array}{cc} A & B \\ 0 & D \end{array} \right),$$
where $A=\lambda\one_m$. The rectangular matrix $B$ can not have more rows than columns,
i.e. $m\leq n-m$; otherwise the rows could not be linearly independent
and there would be a unitary transformation of the m-dimensional
proper eigenspace such that $m-(n-m)$ rows of the transformed $B$ would
be zero, contradicting the undecomposability of $h$.
\end{proof}

\subsection{Stationary states}\label{invariant}

The upshot of the investigations on the action of $\D_h$ at the border are the following two
complementary propositions:
\begin{proposition}\label{nozero}{\bf Quitting the boundary.}
$\D_h$ has no stationary state at the boundary
if $h$ is indecomposable and does not have zero as an eigenvalue.
\end{proposition}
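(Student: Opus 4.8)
The plan is to argue by contradiction: from the existence of a stationary state at the boundary I would extract a nontrivial projector commuting with $h$, contradicting indecomposability. So suppose $\rho$ is stationary, $\D_h(\rho)=0$, with $m:=\rank(\rho)$ obeying $0<m<\dim(\Hh)$, and let $P$ be the orthogonal projector onto the range of $\rho$, which is a nontrivial projector.

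First I would pin down the range of $\rho$. Stationarity means every eigenvalue of $\rho$ is constant, so in particular no eigenvalue that is zero at $t=0$ can increase in first order in $t$. By the contrapositive of Lemma \ref{firstorder} this forces the range of $\rho$ to be an eigenspace or a generalized eigenspace of $h$; equivalently $P$ satisfies the invariance property (\ref{eigen}), $hP=PhP$. In the block form relative to the splitting $\Hh=P\Hh\oplus(\one-P)\Hh$ this says precisely that the lower-left block of $h$ vanishes, so
$$h=\left(\begin{array}{cc} A & B \\ 0 & D \end{array}\right),\qquad \rho=\left(\begin{array}{cc} R & 0 \\ 0 & 0 \end{array}\right),$$
with $R$ of full rank $m$, hence invertible, on $P\Hh$.

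Next I would read off the remaining content of $\D_h(\rho)=0$ from the off-diagonal block. A direct block computation -- the same one that produces the linear term of $Q(t)$ in Lemma \ref{secondorder} -- gives for the upper-right block of $\D_h(\rho)$ the value $-\half\,RA^\dag B$. Setting this to zero and using that $R$ is invertible yields $A^\dag B=0$. This is the point where the hypothesis enters decisively: since $h$ is block upper triangular, its spectrum is the union of the spectra of $A$ and $D$, so the assumption $0\notin\mathrm{spec}(h)$ makes $A$ invertible. Then $A^\dag$ is invertible too, and $A^\dag B=0$ gives $B=0$.

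With the lower-left block already zero and now $B=0$, the operator $h=A\oplus D$ is block diagonal with respect to $P$, i.e. $[P,h]=0$. As $P$ is nontrivial, this contradicts the indecomposability of $h$, and the proof is complete. The step carrying the real weight is the reduction to a lazy face through Lemma \ref{firstorder}; after that the off-diagonal stationarity equation is almost automatic, and the only care needed is the bookkeeping observation that $A$, and not merely $h$, inherits the absence of a zero eigenvalue from the triangular structure.
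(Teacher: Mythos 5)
Your proposal is correct and takes essentially the same route as the paper's proof: the same block decomposition with $R$ of full rank, eliminating the lower-left block $C$ via the first-order growth of zero eigenvalues (the paper applies equation (\ref{rankplus}), i.e.\ $\dot{S}=C\,R\,C^\dag$, directly rather than citing Lemma \ref{firstorder}, but the content is identical), and then deducing $B=0$ from the stationarity of the off-diagonal block $\dot{Q}=-\half RA^\dag B$, contradicting indecomposability. Your explicit justification that the triangular structure passes $0\notin\mathrm{spec}(h)$ down to $A$ is a small point the paper's proof asserts without detail, but it does not change the argument.
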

\begin{proof}
Consider $\rho$ at the boundary, write it as a block matrix
(notation as in the Appendix \ref{appblock}) with $R$ a matrix of full rank,
i.e. $R$ has no zero-eigenvalues, $Q=0$ and $S=0$.
If $C\neq 0$, use equation (\ref{rankplus}), which implies $\dot{S}\neq 0$. Otherwise,
apply equation (\ref{qdot}): $\dot{Q}=RA^\dag B \neq 0$,
since neither $R$ nor $A^\dag$ have zero eigenvalues and $B\neq 0$ because of the non-splitting property of $h$.
\end{proof}

But otherwise, one of the lazy faces may be an \textbf{attractive face}:
\begin{proposition}\label{zerosub}{\bf Eigenvalues zero are attractive.}
If $h$ is indecomposable and has zero as an eigenvalue, then the proper eigenspace
to this eigenvalue zero corresponds to a
face with stationary states.
This face is an attractor, and no other state is stationary.
\end{proposition}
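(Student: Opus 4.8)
The plan is to split the claim into three pieces: that the face attached to $\ker h$ consists of stationary states, that it contains \emph{all} stationary states, and that it attracts every trajectory. The first piece is already contained in Proposition \ref{existence}: writing $P_0$ for the orthogonal projector onto the proper eigenspace $\ker h$, every $\rho$ with $\rho=P_0\rho P_0$ satisfies $h\rho=0=\rho h^\dag$, hence $\D_h(\rho)=0$; these are exactly the $\rho$ forming the face of $\s$ corresponding to $\ker h$. So the real content is uniqueness and attraction, which I would treat in turn.

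For uniqueness, suppose $\D_h(\rho)=0$ and let $\V=\mathrm{supp}(\rho)$ with projector $P_\V$. First I would show $\V$ must be $h$-invariant: if it were not, then by Lemma \ref{firstorder} some eigenvalue of $\rho$ equal to zero would grow to first order in $t$, contradicting $\dot\rho=0$. Thus $P_\V$ obeys (\ref{eigen}), i.e. $hP_\V=P_\V hP_\V$, so relative to $\V\oplus\V^\perp$ the lower-left block of $h$ vanishes; write $A=h|_\V$ for the upper-left block, $B$ for the upper-right block, and $R>0$ for the (full-rank) restriction of $\rho$ to $\V$. Evaluating $\D_h(\rho)$ blockwise, its only possibly nonzero entries are the diagonal block $\D_A(R)$ and the off-diagonal block $-\half RA^\dag B$ (together with its adjoint), so stationarity is equivalent to the two conditions $\D_A(R)=0$ and, since $R$ is invertible, $A^\dag B=0$.

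Next I would decompose $A=\bigoplus_k A_k$ on $\V=\bigoplus_k\V_k$ into indecomposable parts via Proposition \ref{simplesplit}. By the split action each diagonal block $R_{kk}=P_{\V_k}RP_{\V_k}>0$ is a full-rank stationary state of $\D_{A_k}$, while $A^\dag B=0$ becomes $A_k^\dag(P_{\V_k}B)=0$ for each $k$. I then dispose of the blocks. If $A_k$ is invertible, then $P_{\V_k}B=0$; together with the vanishing lower-left block and the fact that $A_k$ reduces $A$, this makes $P_{\V_k}$ commute with $h$, a reducing projector which cannot be all of $\Hh$ (else $h=A_k$ would be invertible, contradicting $0\in\mathrm{spec}(h)$), contradicting indecomposability. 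Hence each $A_k$ is singular. For a singular $A_k$ with $\dim\V_k\ge2$, let $P_0^{(k)}$ project onto $\ker A_k\neq0$; since $A_kP_0^{(k)}=0$, formula (\ref{Heisenberg}) gives $\D_{A_k}^\dag(P_0^{(k)})=A_k^\dag P_0^{(k)}A_k\ge0$, and stationarity yields $0=\Tr[R_{kk}\,\D_{A_k}^\dag(P_0^{(k)})]=\|P_0^{(k)}A_kR_{kk}^{1/2}\|^2$, whence $P_0^{(k)}A_k=0$ and $A_k$ reduces along $\ker A_k$, against its indecomposability. Therefore each $\V_k$ is one-dimensional with $A_k=0$, so $A=0$, i.e. $\V\subseteq\ker h$: every stationary state lies in the zero face.

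For attraction I would use the same positivity globally: since $hP_0=0$, equation (\ref{Heisenberg}) gives $\D_h^\dag(P_0)=h^\dag P_0h\ge0$, so $t\mapsto\Tr[P_0\rho(t)]$ has derivative $\Tr[\rho(t)\,h^\dag P_0h]\ge0$ and is nondecreasing and bounded by $1$, hence convergent. Compactness of $\s$ then provides limit points $\rho_*$ of $\rho(t)$, and integrability of the derivative forces $\Tr[\rho_*\,h^\dag P_0h]=0$, i.e. $P_0h\rho_*=0$. The hard part will be upgrading this to the statement that the whole trajectory enters the zero face: one wants every point of the $\omega$-limit set to be stationary, so that by the uniqueness just proved it lies in $\ker h$ and $\Tr[P_0\rho(t)]\uparrow1$. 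I expect this LaSalle-type invariance step --- differentiating $P_0h\rho(t)=0$ along a limiting orbit and invoking the indecomposability of $h$ to propagate the constraint over all of $\Hh$ --- to be the main obstacle; it is of a piece with the global fact that no path is periodic and every path runs to a stationary state, from which attraction also follows once uniqueness is in hand.
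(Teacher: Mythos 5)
Your stationarity and uniqueness halves are correct, and the uniqueness argument is genuinely different from the paper's: the paper never isolates uniqueness, but proves non-stationarity and attraction in one stroke by passing to $S(t)=P^\perp\rho(t)P^\perp$, noting $\frac d{dt}\Tr[S]=-\Tr[BSB^\dag]$, time-averaging, and running an induction on $\dim(\Hh)$ to handle generalized zero-eigenvectors of $h$, with Proposition \ref{nozero} as the workhorse. Your route instead uses Lemma \ref{firstorder} to force $hP_\V=P_\V hP_\V$, reads off the two stationarity conditions $\D_A(R)=0$ and $A^\dag B=0$ from (\ref{rdot})--(\ref{sdot}), and kills the indecomposable blocks $A_k$ by the positivity computation $0=\Tr[R_{kk}\,\D_{A_k}^\dag(P_0^{(k)})]=\Tr[P_0^{(k)}A_kR_{kk}A_k^\dag P_0^{(k)}]$, so that $P_0^{(k)}A_k=0$ together with $A_kP_0^{(k)}=0$ contradicts indecomposability unless $\dim\V_k=1$ and $A_k=0$. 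I checked these block computations; they are sound, and they avoid the paper's induction entirely.

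The genuine gap is the attraction claim, which you yourself flag as unresolved: ``this face is an attractor'' is half of the proposition, and your argument stops at the monotonicity of $\Tr[P_0\rho(t)]$ plus the constraint $P_0h\rho_*=0$ on limit points, which does not yet place the $\omega$-limit set in the zero face. Worse, the fallback you gesture at --- invoking the global facts that no path is periodic and every path runs to a stationary state --- would be circular in this paper's architecture: the proof of Theorem \ref{nocirclethm} explicitly cites Proposition \ref{zerosub} (``all the states $P_j\rho P_j$ move to the attractive face''), so attraction must be established first. Fortunately the step closes with tools you already have, via the time-average device the paper itself uses: by compactness pick $T_n\to\infty$ with $\bar\rho=\lim_n\frac1{T_n}\int_0^{T_n}\rho(t)\,dt$ existing; then $\D_h(\bar\rho)=\lim_n\frac1{T_n}\bigl(\rho(T_n)-\rho(0)\bigr)=0$, so $\bar\rho$ is stationary, and by \emph{your own uniqueness result} it lies in the zero face, giving $\Tr[P_0\bar\rho]=1$. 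Since $\Tr[P_0\rho(t)]$ is nondecreasing and its Ces\`{a}ro means tend to $1$ along a subsequence, $\Tr[P_0\rho(t)]\to1$; positivity of $\rho(t)$, via $|\langle u|\rho|v\rangle|^2\le\langle u|\rho|u\rangle\langle v|\rho|v\rangle$, then forces the off-diagonal blocks $P_0\rho(t)P_0^\perp$ to vanish together with $P_0^\perp\rho(t)P_0^\perp$, so $\rho(t)$ converges to the face. With that paragraph added, your proof is complete and stands as a genuine, induction-free alternative to the paper's.
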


\begin{proof}
If $\rho$ is in this face, obviously $h\rho=0$,  $\rho h^\dag=0$ and
therefore $\D_h(\rho)=0$ hold. On the other hand, if $\rho$ is not in this face,
one analyzes $P^\perp\rho P^\perp$, where $P$ is the projector onto the proper eigenspace
belonging to the eigenvalue zero of $h$.
Use the triangulated block matrix form
presented in Appendix \ref{appblock}, with $A =0$ acting on the zero-subspace, with
$C=0$, and $S=P^\perp\rho P^\perp$.
Equation (\ref{sdot}) reduces to $\dot{S}= \D_D (S) -\half [B^\dag BS+SB^\dag B]$,
and $\frac d{dt}\Tr[S]= -\Tr[BSB^\dag]$.
Looking at $S_\infty =\lim_{T\to\infty} \frac 1T\int_0^T S(t)dt$, which must be stationary in time,
we find that both terms have to vanish separately in
$\dot{S}_\infty =\D_D (S_\infty) -\half [B^\dag BS_\infty+S_\infty B^\dag B]=0$.

First, assume $h$ does not have improper eigenvectors to the eigenvalue zero.
This means that all the zeros in the diagonal of the triangulated $h$ appear in $A$
and $D=P^\perp h P^\perp$ is without zero-eigenvalues.
Since $h$ is indecomposable, $B\neq 0$.
If $\rho$ has $S\neq 0$, then, as is stated in proposition \ref{nozero}, $\D_D(S)\neq 0$
unless $S$ is a matrix of full rank without zero eigenvalues; but then
$B^\dag BS + SB^\dag B\neq 0$, acting as emptying out the subspace
which is orthogonal to the zero-subspace, and leading to $S=0$.

Now it remains to consider $h$ with improper eigenvectors to the eigenvalue zero.
These zeros appear in the diagonal of $D=P^\perp h P^\perp$,
at least one of them as a proper eigenvalue of $D$,
which acts in $P^\perp\Hh$, a space of lower dimension than that where $h$ acts.
We make an induction on the dimension of the Hilbert space.
The induction hypothesis is, that the sub-subspace $\mathcal K$ spanned by the proper eigenvectors
for the zeros of $D$ makes an attractive face for $\breve{S}(t)$,
which is defined by $\frac d{dt}{\breve{S}}= \D_D (\breve{S})$,
one part of the time evolution of $S(t)$.
The stationary $S_\infty$ vanishes on $\Hh -\mathcal K$.
But the sub-subspace $\mathcal K$ is an improper eigenspace for $h$,
for each $|\psi\rangle\in \mathcal K$ one has $B|\psi\rangle\neq 0$.
So $\frac d{dt}\Tr[S(t)]=-\Tr[BS(t)B^\dag]\neq 0$
for each $S$ which does not vanish on $\mathcal K$, and $P^\perp\Hh$
is emptied out in the course of time.
\end{proof}

\begin{thm}\label{simplestat}{\bf All the stationary states for simple generators.}
\begin{description}
\item[a)]\textbf{Dissipation.} If $h$ is indecomposable and has no zero-eigenvalues, there is a unique stationary state in the interior.
It is
\beq\label{invrho}
\rho_\infty=(h^\dag h)^{-1}/\Tr[(h^\dag h)^{-1}]
\eeq
\item[b)]\textbf{Decay.}  If an indecomposable $h$ does have zero-eigenvalues,
the stationary states form an attractive face,
corresponding to the space spanned by the zero-eigenvalues.
\item[c)]\textbf{Elementary dephasing.}  If, in the other extreme, $h$ is diagonalizable,
the set of invariant states is the set of density matrices commuting with $h$.
\item[d)]\textbf{Stationary splitting.}  If $h$ is decomposable, $h=\bigoplus_j h_j$, each $h_j$ indecomposable,
the direct sums of stationary states $\rho_j$ of its parts
form the set of all the stationary states which do not have phase relations
between the independent subspaces $\Hh_j$.
\item[e)]\textbf{Dephasing of parts.}  If $h$ is decomposable,
there is a set of invariant phase relations $\rho_{j,\ell}$ between
the subspaces which are domains of definition of $h_j$ and $h_\ell$,
if and only if either both $h_j$ and $h_\ell$ have zero-eigenvalues,
or  $h_j \cong h_\ell$.
In the first case the invariant off-diagonal block matrices are
$\rho_{j,\ell}=\sum_{\alpha,\beta}c_{\alpha,\beta}|v_{j,\alpha}\rangle\langle w_{\ell,\beta}|$
where the $|v_{j,\alpha}\rangle$ and $|w_{\ell,\beta}\rangle$ are the proper zero-eigenvectors of $h_j$ and $h_\ell$.
In the second case any invariant phase relation $\rho_{j,\ell}$ is equivalent
to $\rho_j$ multiplied with some complex number,
where $\rho_j\cong \rho_\ell$ is the stationary state of $\D_{h_j}$.
\end{description}
\end{thm}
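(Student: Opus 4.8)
The plan is to read off parts (a)--(d) from the propositions already established and to isolate (e) as the only genuinely new statement. Part (b) is verbatim Proposition \ref{zerosub}. For (a), Proposition \ref{existence} produces the candidate (\ref{invrho}), and Proposition \ref{nozero} rules out any stationary state on the boundary, so every stationary state has full rank. For uniqueness I would argue on the dual side: the fixed points of the unital semigroup generated by $\D_h^\dag$ form a $*$-algebra, which a Kadison--Schwarz estimate against the faithful state $\rho_\infty$ identifies with the commutant $\{h,h^\dag\}'$. Indecomposability says no nontrivial projector commutes with $h$, hence with $h$ and $h^\dag$ together; a finite-dimensional $*$-algebra whose only projectors are $0$ and $\one$ equals $\C\one$, so $\dim\ker\D_h^\dag=1$ and, taking HS-adjoints, $\dim\ker\D_h=1$.

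Parts (c) and (d) are direct computations. Diagonalising a normal $h$ decouples the entries as $\dot{\rho}_{ik}=\mu_{ik}\rho_{ik}$ with ${\rm Re}\,\mu_{ik}=-\half|h_i-h_k|^2$, which vanishes precisely when $h_i=h_k$; stationarity therefore forces $\rho_{ik}=0$ whenever $h_i\neq h_k$, i.e.\ $[\rho,h]=0$, proving (c). For (d), Proposition \ref{simplesplit} makes each diagonal block $P_j\rho P_j$ evolve under its own $\D_{h_j}$, so a block-diagonal $\rho$ is stationary iff each block is stationary for $\D_{h_j}$; inserting (a)--(c) blockwise yields the claimed direct sums.

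Part (e) is governed by the cross-generator coming from (\ref{splitaction2}),
\beq
\mathcal{L}_{j\ell}(\sigma)=h_j\sigma h_\ell^\dag-\half\bigl(h_j^\dag h_j\,\sigma+\sigma\,h_\ell^\dag h_\ell\bigr),
\eeq
and the invariant phase relations are exactly $\ker\mathcal{L}_{j\ell}$. I would first verify the two advertised families. If $v,w$ are proper zero-eigenvectors of $h_j,h_\ell$, each of the three terms of $\mathcal{L}_{j\ell}(|v\rangle\langle w|)$ vanishes separately, giving the Case-1 solutions; their completeness follows by rerunning the emptying-out argument of Proposition \ref{zerosub} on $\mathcal{L}_{j\ell}$ to force the support of any invariant $\sigma$ into the proper zero-eigenspaces on both sides. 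In the equivalent case the unitary reshuffling of Proposition \ref{simplesplit} rewrites the bundle as $\one_N\otimes h_1$, whence $\D_{\one\otimes h_1}$ acts blockwise as $\D_{h_1}$; each off-diagonal block then lies in $\ker\D_{h_1}=\C\rho_j$ by the uniqueness established in (a), which is exactly the Case-2 assertion.

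The main obstacle is the ``only if'' direction -- excluding any invariant phase relation when $h_j,h_\ell$ are inequivalent and not both carrying a zero-eigenvalue. The approach I favour is a Schur-lemma analogue via the HS-adjoint $\mathcal{L}_{j\ell}^\dag(f)=h_j^\dag f h_\ell-\half(h_j^\dag h_j f+f h_\ell^\dag h_\ell)$, whose kernel one computes to consist of the unitary intertwiners $W$ satisfying $h_jW=Wh_\ell$ (mirroring $\one\in\ker\D_h^\dag$). Here indecomposability plays the part of irreducibility: by the same commutant-triviality used for (a), an intertwiner between indecomposable pieces is either zero, when they are inequivalent, or unique up to a scalar, when they are equivalent. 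Since $\dim\ker\mathcal{L}_{j\ell}=\dim\ker\mathcal{L}_{j\ell}^\dag$, this collapses the kernel to $0$ in the inequivalent no-zero case and to a single complex line in the equivalent case. The delicate point, where I expect the real labour, is the mixed situation in which exactly one part carries a zero-eigenvalue: there one must still show the intertwiner space collapses, so that coherence can survive only inside commonly silent zero-eigenspaces.
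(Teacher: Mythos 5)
Your treatment of (a)--(d) is essentially sound, and for (a) you take a genuinely different route from the paper: the paper's uniqueness argument is a two-line convexity observation (two distinct stationary states would span a line of stationary matrices, which must cross the boundary, contradicting Proposition \ref{nozero}), whereas you run a Frigerio-type dual argument, using faithfulness of the $\rho_\infty$ of (\ref{invrho}) and the Kadison inequality to show the fixed points of the dual semigroup form the $*$-algebra $\{h,h^\dag\}'=\C\one$, so $\dim\ker\D_h^\dag=\dim\ker\D_h=1$. That is correct and heavier than the paper's step, but it buys you the \emph{linear} statement $\dim\ker\D_{h_j}=1$, which you then legitimately reuse for the ``if'' half of the equivalent case of (e); the paper never needs this machinery.

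The genuine gaps are both in the ``only if'' half of (e), and they share one missing idea: \emph{preservation of positivity of an enveloping state}. First, your Case-1 completeness rests on ``rerunning the emptying-out argument'' of Proposition \ref{zerosub} on $\mathcal{L}_{j\ell}$; but that argument is driven by $S\geq 0$ and the monotonicity $\frac{d}{dt}\Tr S=-\Tr[BSB^\dag]\leq 0$, and an off-diagonal block has neither a sign nor a monotone norm --- Subsection \ref{mixing} explicitly warns that the HS norm of phase relations need not decrease --- so there is no Lyapunov quantity with which to rerun it. The paper instead embeds $\rho_{j,\ell}$ as a block of a genuine state on $\Hh_j\oplus\Hh_\ell$, lets the diagonal blocks relax to the attractive face (Proposition \ref{zerosub}), and reads off from positivity that $h_j\rho_{j,\ell}=0$; then $\dot\rho_{j,\ell}=-\half\rho_{j,\ell}h_\ell^\dag h_\ell$, whose vanishing forces the range of $\rho_{j,\ell}^\dag$ into $\ker h_\ell$. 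This single step settles Case 1 \emph{and} makes the mixed case --- which you flag as ``the real labour'' --- a one-liner: if $h_\ell$ has no zero-eigenvalue, $h_\ell^\dag h_\ell$ is invertible and $\rho_{j,\ell}=0$. You do need some such input there, because your dual machinery genuinely breaks in the mixed case: the stationary state on $\Hh_j\oplus\Hh_\ell$ is then not faithful, and fixed points of a unital CP semigroup need not form an algebra without faithfulness. Second, in the no-zero case your claim that $\ker\mathcal{L}_{j\ell}^\dag$ ``one computes'' to consist of intertwiners is the crux, not a computation: to get it along your lines you must embed $f$ as the off-diagonal block of an $F$ on $\Hh_j\oplus\Hh_\ell$, use the faithful stationary state $\bar\rho_j\oplus\bar\rho_\ell$ and the Kadison inequality to identify the fixed-point set with $\{h_j\oplus h_\ell,\,(h_j\oplus h_\ell)^\dag\}'$, and extract $h_jf=fh_\ell$, $h_j^\dag f=fh_\ell^\dag$, with $ff^\dag$ and $f^\dag f$ scalar by indecomposability. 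Carried out, this would be a viable alternative to the paper's trick (scale $\lambda\rho_{j,\ell}$ until the total state reaches the boundary, deduce $C=0$ and $RA^\dag B=0$ from (\ref{rankplus}) and the following formulas, hence $B=0$ and a nontrivial projector commuting with $h_j\oplus h_\ell$, forcing $h_j=Vh_\ell V^\dag$); as written, however, both the intertwiner identification and the Case-1 completeness are asserted rather than proved.
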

\begin{proof}

a) The stationarity of $\rho_\infty$ has been stated and proved in Proposition \ref{existence}
and in its proof.
Suppose there were two different stationary states, $\rho_0$ and $\rho_\infty$.
By linearity of the evolution equation, the whole line $\rho_0+\lambda(\rho_\infty-\rho_0)$
would consist of invariant matrices.
But this line would intersect the boundary where there are no stationary states,
see Proposition \ref{nozero}.

b) See Proposition \ref{zerosub}.

c) That $[h,\rho]=0$ implies $\D_h(\rho)=0$ is obvious.
  The off-diagonal elements obey $\dot{\rho}_{i,j}=(h_i^*h_j-\half(|h_i|^2+|h_j|^2))$,
  which is not zero, if $h_i\neq h_j$.
   For details on the decrease of phase relations $\rho_{i,j}$ in this case
   see Subsection \ref{phases}.

d) Thats the combination of the previous results, using Proposition \ref{simplesplit}.

e) Any density matrix $\rho$ for the whole Hilbert space $\Hh$ can be
considered as a big block matrix with entries $\rho_{j,\ell}$,
corresponding to the splitting of $\Hh =\bigoplus_j\Hh_j$ defined in Proposition \ref{simplesplit}.
Each $\rho_{j,\ell}$ has a time evolution on its own, without any mixing with the other blocks.
The stationary diagonal elements $\rho_{j,j}$ have been identified in a) to d).
Knowledge about them, together with the general preservation of positivity,
helps to find possible invariant off-diagonal blocks  $\rho_{j,\ell}$.
So, for any pair $h_j$, $h_\ell$ of indecomposable parts of $h$, we look at
a state $\rho$ restricted to the subspace $\Hh_j \oplus\Hh_\ell$.

First assume that $h_j$ has zero as an eigenvalue.
Since positivity is preserved as $t\to\infty$, and since
stationary $\rho_{j,j}$ reside on the subspace spanned by zero-eigenvectors,
an invariant  $\rho_{j,\ell}$ must also obey $h_j\rho_{j,\ell}=0$.
Its time evolution is  $\dot{\rho}_{j,\ell}=-\half\rho_{j,\ell}h_\ell^\dag h_\ell$,
and this can vanish only if  $\rho_{j,\ell}h_\ell=0$.
Invariant phase relations exist therefore only if also $h_\ell$ does have
zero eigenvectors. The stationary $\rho_{j,\ell}$ can then be expanded in ket-bra products
of zero-eigenvectors.

Now consider the case that both $h_j$ and $h_\ell$ are indecomposable and do not have
zero as an eigenvalue.
Assume that an invariant  $\rho_{j,\ell}\neq 0$ exists. Keeping the diagonal blocks
 $\rho_{j,j}$ and $\rho_{\ell,\ell}$ fixed, given by the unique stationary states
 determined by $h_j$ and $h_\ell$, multiplied with any non-zero weights,
the off-diagonal elements can be chosen with some factor $\lambda$,
 as $\lambda\rho_{j,\ell}$ and $\rho_{\ell,j}= \lambda^*\rho_{j,\ell}^\dag$,
 such that the whole $\rho$ is at the boundary of the set of states.
 Now we refer to the formulas stated in Appendix \ref{appblock}, formula
(\ref{rankplus}) and the following lines.
As is shown there, the density matrix, now represented by $R$ acting on a subspace,
is invariant only if $C=0$ and also $RA^\dag B=0$.
Now $R$ has no zero-eigenvalues, and $A$ has a subset of the whole set
of diagonal elements of $h_j\oplus h_\ell$ as eigenvalues, also without zeros.
So $B=0$ must hold.
But then $h_j\oplus h_\ell\cong\left( \begin{array}{cc} A & 0 \\ 0 & D \end{array} \right)$
commutes with the projector $P=\left( \begin{array}{cc} 1 & 0 \\ 0 & 0 \end{array} \right)$.
Since neither $h_j$ nor $h_\ell$ commute on their domains of definition $P_j\Hh$ and $P_\ell\Hh$ with
anything else but the constant operators, the only possibility for such a
nontrivial $P$ is the unitary equivalence of $h_j$ with $h_\ell$:
$h_j=Vh_\ell V^\dag$, \quad $V\cdot V^\dag=\one_j$,
$$P= p\one_j + \sqrt{p(1-p)}V + \sqrt{p(1-p)}V^\dag + (1-p)\one_\ell. $$
This unitary equivalence implies then the equivalence of the invariant blocks
with some factors, equivalence of
$c_j\cdot\rho_{j,j}$ with $c_{j,\ell}\cdot\rho_{j,\ell}$ and $c_\ell\cdot\rho_{\ell,\ell}$.
\end{proof}

We may reformulate the results, so that equivalent parts are
collected as factors in a tensor product:
If one knows the stationary states $\rho_{j,\alpha}$ for an indecomposable $h_j$,
one knows all possible states for $\one\otimes h_j$; they are $\sum_\alpha\sigma_\alpha\otimes \rho_{j,\alpha}$,
for any positive matrix $\sigma_\alpha$ which has $\Tr[\sigma_\alpha]=0$.
Stationary phase relations between the parts may exist.

In the case of an indecomposable $h$ with no zero-eigenvalues it can be shown,
as a companion to Proposition \ref{zerosub}, that
the unique stationary state is an attractor.
See Subsection \ref{nocircle}.

Parts a) and b) of Theorem \ref{simplestat} and the formula (\ref{invrho})
allow for an inversion, constructing an operator $h$ in such a way that the
evolution generated by $\D_h$ has to lead to a given state $\rho$.

\begin{thm}{\bf Construction of a simple evolution leading to a given state.}\label{leading}
For $\rho$ either a pure state or a state with a density matrix of full rank,
but non constant, there exist dynamical semigroups with a simple GKS-Lindblad
generator which have the given $\rho$ as the unique stationary state.
\end{thm}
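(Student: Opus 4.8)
The plan is to handle the two cases separately, in each case producing a single operator $h$ (with Hamiltonian part $H=0$) for which Theorem \ref{simplestat} applies verbatim, and then reading off uniqueness from its parts a) and b) together with Proposition \ref{nozero}. For a \emph{pure} state $\rho=|\psi\rangle\langle\psi|$ I would use part b): I want an indecomposable $h$ whose \emph{proper} eigenspace to the eigenvalue $0$ is exactly the line $\mathrm{span}\{\psi\}$, so that by Proposition \ref{zerosub} the attractive face of stationary states collapses to the single point $\rho$. The nilpotent Jordan block does this: fixing an orthonormal basis $\psi=e_1,e_2,\dots,e_n$ and setting $h=\sum_{k=1}^{n-1}|e_k\rangle\langle e_{k+1}|$ gives $he_1=0$, $he_{k+1}=e_k$, hence $\ker h=\mathrm{span}\{\psi\}$; and $h$ is indecomposable because the commutant of a single Jordan block consists of the upper-triangular Toeplitz matrices, among which the only self-adjoint projectors are $0$ and $\one$.

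For a \emph{full-rank, non-constant} $\rho$ I would use part a) and the formula (\ref{invrho}). As $\rho$ is invertible, $M:=\rho^{-1/2}$ is positive definite, and as $\rho$ is not the maximally mixed state, $M$ is not a multiple of $\one$. I seek $h$ with $h^\dag h=M^2=\rho^{-1}$, so that $(h^\dag h)^{-1}/\Tr[(h^\dag h)^{-1}]=\rho$; by polar decomposition every such $h$ equals $UM$ for a unitary $U$, and any such $h$ has no zero-eigenvalue. The entire task thus becomes the choice of $U$ making $UM$ indecomposable. I would first simplify the criterion: if a self-adjoint projector $P$ commutes with $h=UM$ it commutes with $h^\dag h=M^2$, hence with $M$, and then $[P,UM]=(PU-UP)M=0$ forces $[P,U]=0$ as well; the converse is immediate. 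So $h=UM$ is indecomposable precisely when no nontrivial projector commutes simultaneously with $M$ and with $U$.

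To build such a $U$ I would order an eigenbasis of $M$ so that the eigenvalues $\nu_1\le\cdots\le\nu_n$ are non-decreasing and take $U$ to be the cyclic shift $e_i\mapsto e_{i+1}$ (indices modulo $n$). Any projector commuting with $U$ is circulant, $P=\sum_d c_d U^d$, and the further condition $[P,M]=0$ reads $c_d(\nu_{i+d}-\nu_i)=0$ for all $i$, so $c_d=0$ for every offset $d$ at which two eigenvalues differ. Since the sorted sequence is aperiodic --- the positions carrying the smallest eigenvalue form a proper arc $\{1,\dots,m_1\}$ that no nonzero cyclic shift fixes --- every $d\neq0$ is such an offset, leaving only $P=c_0\one$. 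Hence $h=U\rho^{-1/2}$ is indecomposable with no zero-eigenvalue, and Theorem \ref{simplestat}a) together with Proposition \ref{nozero} identifies $\rho$ as the unique stationary state, its attractivity being the companion statement announced after the theorem.

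The main obstacle is exactly this last step for full rank: securing indecomposability when $\rho$ has degenerate eigenvalues. A cyclic shift on a badly chosen ordering can genuinely fail --- for eigenvalues arranged as $1,2,1,2$ the circulant $\half(\one+U^2)$ commutes with both $M$ and $U$, so $h$ decomposes --- and it is precisely the non-decreasing ordering that destroys this residual periodicity. Verifying that the sorted arrangement admits no nonzero period, and hence forces every circulant coefficient $c_d$ with $d\neq0$ to vanish, is where the real work of the proof lies; the pure-state case, by contrast, is essentially immediate.
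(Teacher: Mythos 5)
Your construction is the same as the paper's in both cases: for a pure state the nilpotent shift $h=\sum_k|e_k\rangle\langle e_{k+1}|$ with $\ker h=\mathrm{span}\{\psi\}$, and for full rank $h=U\rho^{-1/2}$ with $U$ a cyclic shift in an eigenbasis of $\rho^{-1/2}$, with the identical reduction that a projector commuting with $h=UM$ must commute with $M$ and with $U$ separately. Where you go beyond the paper is the final indecomposability step. The paper argues that $P$ and $|h|$ have common eigenvectors $\phi_j$ so that $P=\sum_{j\in I}|\phi_j\rangle\langle\phi_j|$, and then asserts that the cyclic shift commutes with no such $P$; but when $\rho$ has degenerate eigenvalues the common eigenbasis of $P$ and $|h|$ need not be the pre-chosen basis in which $U$ was defined, so diagonality of $P$ in that basis cannot be assumed --- and your $1,2,1,2$ example shows the paper's construction genuinely fails for a bad ordering, since $\half(\one+U^2)$ is then a nontrivial orthogonal projector commuting with both $M$ and $U$. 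Your circulant-commutant analysis ($P$ commuting with the non-derogatory shift must be $\sum_d c_d U^d$, and $[P,M]=0$ kills every $c_d$ with $d\neq 0$ unless $d$ is a cyclic period of the eigenvalue sequence, which the sorted, non-constant sequence excludes) supplies exactly the missing argument and makes precise which orderings work. So your proof is correct, follows the paper's route, and in fact patches a small but real gap in the published proof of the full-rank case; the pure-state case matches the paper verbatim in substance.
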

\begin{proof}
In case $\rho$ is a pure state, take an indecomposable $h$ with just one zero eigenvalue,
so that $h\rho=0$. This may be, for example, $h=\sum_j|\phi_j\rangle\langle\phi_{j+1}|$,
with $|\phi_0\rangle\langle\phi_0|=\rho$.

In case $\rho$ is a matrix of full rank, construct $h=U|h|$ with $|h|=\rho^{-1/2}$,
as it has to be according to formula (\ref{invrho}).
This $h$ is decomposable iff there exists a nontrivial projector $P$ commuting with $h$.
This commutation implies also the commutation of $P$ with $h^\dag$, with $|h|^2=h^\dag h$ and with $|h|$.
Since this $|h|$ has no zero eigenvalue and is invertible, $P$ has also to commute with $U$.
As commuting operators, $P$ and $|h|$ have common eigenvectors $\phi_j$, such that
$|h|=\sum_j r_j^{-1/2}|\phi_j\rangle\langle\phi_j|$, and
$P=\sum_{j\in I} |\phi_j\rangle\langle\phi_j|$.
Now one has to consider a unitary operator $U$ which
mixes all the eigenspaces of $|h|$ belonging to different eigenvalues.
This may be, for example, $U=\sum|\phi_j\rangle\langle\phi_{j+1}|+|\phi_N\rangle\langle\phi_{0}|$,
where $N$ is the dimension of the Hilbert space.
Such a $U$ does not commute with
any $P$ of the sort which commutes with $h$,
and $h$ is indecomposable.
\end{proof}

For exceptional $\rho$, not pure but not invertible, and also for the maximally
mixed $\omega$, it is possible to find a semigroup with this $\rho$ as
a stationary state, however it will not be the unique one.

\subsection{The geometry of paths}\label{nocircle}

The geometric characterization of the paths $\rho(t)$ is equivalent to
the algebraic characterization of the superoperator $\D$,
determining its eigenvalues, its proper eigenspaces and also the generalized eigenspaces.
Logical arguments can work in both directions, from geometry to algebra or vice versa.

On the geometrical side we know that positivity of $\rho$ is preserved,
that the compact set $\s$ of states is mapped into itself.
The algebraic consequence for $\D$ is that there is no eigenvalue with strictly positive real part.
The detailed argument for this fact can be seen in the following discussion
which is needed for deeper analysis.

The reflection $\sigma\leftrightarrow\sigma^\dag$ is compatible with the action of $\D$,
i.e. $\D(\sigma^\dag)=[\D(\sigma)]^\dag$.
The consequences are: If an eigenvalue $\lambda$ of $\D$ is a real number,
the accompanying eigenmatrix and generalized eigenmatrices -- if there exist any -- can be chosen as selfadjoint.
Each pair of a complex eigenvalue $\lambda$ together with the eigenmatrix $\sigma$
has a mirror companion in the pair $\lambda^*$ and  $\sigma^\dag$.
For the set of self adjoint matrices this implies that for each complex number $z$
the matrix $\tau_z=z\sigma+z^*\sigma^\dag$ lies on a path $\tau_z(t)$ which is a spiral.
The imaginary part of $\lambda$ determines the time $T$ of revolution,
the real part gives the change of the HS-norm.
\beq
T:=2\pi/Im(\lambda),\qquad \tau_z(T)=e^{2\pi Re(\lambda)T}\tau_z
\eeq
Now consider any $\rho$ in the interior of $\s$, and some number $\eps$,
such that
\beq\label{spiral}
\rho_\pm =\rho\pm\eps\tau_z
\eeq
are both inside of $\s$.
Their distance changes exponentially in time,
$$\|\rho_+(t)-\rho_-(t)\| =\|\rho_+(0)-\rho_-(0)\|\cdot e^{2\pi Re(\lambda)t}.$$
But both of them remain in $\s$, their distance can not increase, $Re(\lambda)\leq 0$.

It remains to discuss the cases of $Re(\lambda)=0$.
We know for sure that stationary states exist.
They are eigenmatrices to the eigenvalue $\lambda=0$.
We state the absence of other cases as

\begin{thm}{\bf No circular paths.}\label{nocirclethm}
There are no circular paths in the HS-space of matrices;
there are no eigenvalues $\lambda$ of $\D$ on the imaginary axis except $\lambda=0$.
\end{thm}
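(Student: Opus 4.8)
The plan is to pass to the dual generator $\D^\dag_h$ of (\ref{Heisenberg}) and exploit the infinitesimal form of the Kadison inequality. Since $\D^\dag$ is the Hilbert--Schmidt adjoint of $\D$, its eigenvalues are the complex conjugates of those of $\D$; moreover $\D^\dag$ also commutes with the reflection $\sigma\mapsto\sigma^\dag$, so its spectrum is symmetric about the real axis. Hence a purely imaginary eigenvalue $i\omega$ (with $\omega$ real) of $\D$ forces the existence of a matrix $f\neq 0$ with $\D^\dag_h(f)=i\omega f$, and it suffices to exclude $\omega\neq 0$ on the dual side.

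The central tool is the dissipation form. A direct computation from (\ref{Heisenberg}) gives, for every matrix $f$,
$$\Gamma(f):=\D^\dag_h(f^\dag f)-\D^\dag_h(f)^\dag f-f^\dag\D^\dag_h(f)=[h,f]^\dag[h,f]\ \geq\ 0,$$
the generator version of the Kadison inequality, and the only point where positivity enters. Let $\rho_\infty$ be a \emph{faithful} stationary state; in the basic dissipative case this exists and equals $\rho_\infty=(h^\dag h)^{-1}/\Tr[(h^\dag h)^{-1}]$ by Proposition \ref{existence}. Pairing $\Gamma(f)$ with $\rho_\infty$ and using stationarity in the form $\Tr[\rho_\infty\D^\dag_h(X)]=\Tr[\D_h(\rho_\infty)X]=0$ together with $\D^\dag_h f=i\omega f$, the first term drops and the remaining two $i\omega$ contributions cancel, leaving $\Tr[\rho_\infty\,[h,f]^\dag[h,f]]=0$. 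As $\rho_\infty$ is faithful and the operator is positive, this forces $[h,f]=0$.

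Running the identical computation on $f^\dag$, which is an eigenmatrix of $\D^\dag_h$ with eigenvalue $-i\omega$, yields $[h,f^\dag]=0$, i.e.\ $[h^\dag,f]=0$. With $f$ commuting with both $h$ and $h^\dag$, the generator (\ref{Heisenberg}) collapses: $h^\dag f h=f h^\dag h$, whence $\D^\dag_h(f)=\frac12[f,h^\dag h]=0$. Comparing with $\D^\dag_h f=i\omega f$ and $f\neq 0$ gives $\omega=0$, the desired contradiction. Together with $Re(\lambda)\le 0$ already established by the spiral argument (\ref{spiral}), this completes the indecomposable dissipative case.

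For general $h$ the eigenvalues of $\D$ are, by Proposition \ref{simplesplit}, the union of those of the diagonal block actions $\D_{h_j}$ of (\ref{splitaction1}) and those of the off-diagonal phase-block actions (\ref{splitaction2}). Each diagonal block is handled by restricting the commutator argument to the minimal invariant (attractive) subspace furnished by Propositions \ref{zerosub} and \ref{existence}, on which a faithful stationary state lives; the off-diagonal phase blocks are precisely the invariant phase relations classified in Theorem \ref{simplestat}\,e), where any block not already stationary carries strictly negative real part, as the explicit coefficient $h_i\overline{h_j}-\frac12(|h_i|^2+|h_j|^2)$ in the diagonalizable case illustrates. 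The main obstacle is exactly this faithfulness requirement: the clean commutator argument needs a faithful reference state, so the entire content of the decay and decomposable cases is to localize to the enclosure/attractive subspaces where such a state exists and to check that the phase blocks contribute no further eigenvalue on the imaginary axis.
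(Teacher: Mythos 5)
Your central computation is correct, and for the main case it is a genuinely different route from the paper's. The dissipation identity
\beq
\Gamma(f):=\D^\dag_h(f^\dag f)-\D^\dag_h(f)^\dag f-f^\dag\D^\dag_h(f)=[h,f]^\dag[h,f]\geq 0
\eeq
checks out, and pairing it with a faithful stationary state to force $[h,f]=0$, repeating with $f^\dag$ to get $[h^\dag,f]=0$, and then collapsing $\D^\dag_h(f)=\half[f,h^\dag h]=0$ is the standard infinitesimal commutant argument for Lindblad generators with a faithful invariant state. The paper argues instead with the \emph{integrated} maps $\Phi_t$: Kadison's inequality for $\Phi_t$ together with the time-averaged stationary state forces $F^\dag F=\|F^\dag F\|\one_j$, so $F$ is a multiple of a unitary or isometry $V$, and a trace identity in the HS inner product, $\langle\hat{h}_\ell|h_j\rangle=\half(\langle\hat{h}_\ell|\hat{h}_\ell\rangle+\langle h_j|h_j\rangle)+i\,r\,n$, is then killed by Cauchy--Schwarz and the AM--GM inequality. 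Your version buys something concrete: since Proposition~\ref{existence} needs only invertibility of $h$, not indecomposability, $(h^\dag h)^{-1}/\Tr[(h^\dag h)^{-1}]$ is a \emph{faithful} stationary state on all of $\Hh_j\oplus\Hh_\ell$ whenever neither part has a zero eigenvalue, so a single application of your argument disposes of diagonal and off-diagonal blocks simultaneously in that regime, with none of the paper's isometry bookkeeping or equal-dimension discussion.

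The genuine gap is in the zero-eigenvalue (decay) blocks, where your last paragraph describes the task rather than performs it. ``Restricting the commutator argument to the attractive subspace'' does not work as stated: when $h_j$ has a zero eigenvalue, every stationary state is supported on the attractive face (Proposition~\ref{zerosub}) and hence is \emph{not} faithful on $\Hh_j$; pairing $\Gamma(f)$ with such a state only yields $[h,f]P=0$, with $P$ the projector onto the zero eigenspace, which does not give $[h,f]=0$ -- and on the face itself the dynamics is identically zero, so the ``restricted'' argument says nothing about components of $f$ living off the face. The paper closes this case by a different mechanism that your proposal does not supply: attractivity forces every trajectory to converge to the face, which is incompatible with a norm-preserving rotation $e^{i\omega t}$ of any component off the face (hence $Re(\lambda)<0$ strictly there); and for a phase block with a zero-eigenvalue part, preservation of positivity forces $h_j\sigma=0$, after which $\dot{\sigma}=-\half\sigma h_\ell^\dag h_\ell$ has spectrum on the nonpositive real axis. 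Note also that citing Theorem~\ref{simplestat}\,e) cannot substitute for this: it classifies the \emph{invariant} ($\lambda=0$) phase relations only and does not by itself exclude eigenvalues $i\omega\neq 0$ on phase blocks; the coefficient $h_ih_j^*-\half(|h_i|^2+|h_j|^2)$ covers only the diagonalizable illustration. So your proof is complete and elegant on every block where $h$ is invertible, but the decay blocks and their phase relations still require the attractivity/positivity arguments of Propositions~\ref{zerosub} and the phase-block computation, which you gesture at without carrying out.
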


\begin{proof}
We decompose the Hilbert space $\Hh=\bigoplus_j P_j \Hh$ and the operator $h=\sum_jP_jh P_j$
into indecomposable parts, as stated in Proposition \ref{simplesplit}.
For matrices $\sigma$ we use the block matrix decomposition $\sigma=\sum_{j,\ell}P_j\sigma P_\ell$
and observe the mutual independence of evolutions for the blocks,
$\D(P_j\sigma P_\ell)=P_j\D(\sigma) P_\ell$.
So the search for eigenmatrices can be done by considering single block matrices.

First we consider $h_j$ having zero as an eigenvalue.
We know by Proposition \ref{zerosub} that all the states $P_j\rho P_j$
move to the attractive face, where nothing  changes in time any more.
Now consider the path $\rho_+(t)$ starting with $\rho_0+\eps\tau_+$, as above in equation (\ref{spiral}).
$\rho_0(t)$ moves to a $\rho_\infty$ on the attractive face, as $t\to\infty$.
Also $\rho_+(t)$ has to stay inside of $\s$, moving to the attractive face,
where there can be no circular path.
So also $\rho_+(t)$ has to approach $\rho_\infty$, so $Re(\lambda)<0$, strictly.
The path may have the structure of a tornado.

For a matrix $\sigma=P_j\rho P_\ell$, giving the phase relation between $\Hh_j$ and $\Hh_\ell$,
we know that it has to move to the subspace of matrices obeying $h_j\sigma=0$,
because of preserving positivity for states on the domain $\Hh_j\oplus\Hh_\ell$.
For such a $\sigma$ the time evolution is dictated as $\dot{\sigma}=-\half\sigma h^\dag_\ell h_\ell$.
This gives either exponential decrease in time, or invariance, or a combination of both,
but no circular path, no nonzero eigenvalue of $\D$ on the imaginary axis.

Now it remains to consider those $\sigma=P_j\sigma P_j$ and $\sigma=P_j\sigma P_\ell$
where neither $h_j$ nor $h_\ell$ do have zero as an eigenvalue.
Here we turn to study the adjoint $\D^\dag$. It determines the time evolution in
the Heisenberg picture. Its set of eigenvalues is the complex conjugated set of
eigenvalues of $\D$, so these sets are actually identical.

We use the Kadison inequality, \cite{K52} (See the Appendix \ref{cpmaps}),
\beq
\Phi_t(F^\dag F)\geq \Phi_t(F^\dag)\Phi_t(F)
\eeq
which holds for the mapping $F\mapsto \Phi_t(F)$ of time evolution in the Heisenberg picture.

Consider an eigenmatrix $F=P_\ell\, F\, P_j$, with $\D^\dag (F)=\lambda F$.
The adjoint $F^\dag$ is an eigenmatrix with eigenvalue $\lambda^*$
Assume $\lambda=i\, r$ with $r\in \R$.
Then $ \Phi_t(F^\dag)\Phi_t(F)=(e^{-irt}F^\dag)(e^{irt}F)=F^\dag F$.
The matrix $F^\dag F$ maps $\Hh_j$ into $\Hh_j$ and is positive.
Take the projector onto an eigenvector with the largest eigenvalue, which is $\|F^\dag F\|$, as
a density matrix $\rho$, and switch for the moment between the Schr\"{o}dinger and the Heisenberg pictures:
\beq
\Tr[\rho(t)\,F^\dag F]=\Tr[\rho\,\Phi_t(F^\dag F)]\geq\Tr [\rho\, F^\dag F]=\|F^\dag F\|.
\eeq
The mean value over the time $\geq 0$ has to give the stationary state $\bar{\rho}$ on $\Hh_j$,
which is unique and has no zero eigenvectors in $\Hh_j$, see the Theorem \ref{simplestat}.
\beq
\Tr[\bar{\rho}\,F^\dag F]= \lim_{T\to\infty}\frac1T \int_0^T\Tr[\rho(t)\,F^\dag F]\geq\|F^\dag F\|.
\eeq
On the other hand, obviously $\Tr[\bar{\rho}\,F^\dag F]\leq \|F^\dag F\|$,
so $\Tr[\bar{\rho}\,F^\dag F] = \|F^\dag F\|$.
In a basis where $\bar{\rho}$ is diagonal, the matrix representing $F^\dag F$
has therefore all diagonal elements equal to its norm, which implies vanishing of
all the off-diagonal elements, so $F^\dag F= \|F^\dag F\|\one_j$.

If $\ell\neq j$ one finds also $F\, F^\dag= \|F\, F^\dag\|\one_\ell$ by doing
analogous investigations. A first consequence is, that the existence of such an eigenmatrix
is possible only if $\Hh_j$ and $\Hh_\ell$ have equal dimension.
With $F\,F^\dag\, F=F\,\|F^\dag\, F\|=\|F\,F^\dag\|\, F$
we conclude $\|F^\dag\, F\|=\|F\,F^\dag\|$, so $V:=F/\sqrt{\|F^\dag F\|}$ is
a unitary operator if $\Hh_j=\Hh_\ell$; and it is an isometry between
$\Hh_j$ and $\Hh_\ell$ if these are different subspaces.
We turn back to the Heisenberg picture and multiply the evolution equation (\ref{Heisenberg})
for $V$ from the left by $V^\dag$:
\beq
V^\dag \D^\dag(V)=V^\dag h^\dag_\ell\, V \,h_j-\half V^\dag h^\dag_\ell h_\ell V -\half V^\dag V h_j^\dag h_j
=V^\dag\lambda V=i\, r \one_j \,.
\eeq
We insert $V\,V^\dag$ between $h^\dag_\ell$ and $h_\ell$, define $\hat{h}_\ell=V^\dag h_\ell V$,
take the trace and read it as an equation for inner products in the HS-space:
\beq\label{schwarzplus}
\langle \hat{h}_\ell|h_j\rangle =\half(\langle \hat{h}_\ell|\hat{h}_\ell\rangle+\langle h_j|h_j\rangle)+i\, r\, n,
\eeq
where $n=\dim(\Hh_j)$. For $j=\ell$ we can conclude immediately that $\lambda=i\, r=0$.
For $j\neq\ell$ this follows from the Cauchy Schwarz inequality and the
inequality between the geometric and the arithmetic mean
\beq
|\langle \hat{h}_\ell|h_j\rangle|\leq \|\hat{h}_\ell\|\,\|h_j\|\leq\half(\|\hat{h}_\ell\|^2+\|h_j\|^2).
\eeq
\end{proof}

\section{Special systems and examples}\label{special}

\subsection{Dephasing}\label{phases}

The stable phase relations between indecomposable parts have been identified in Theorem (\ref{simplestat}).
In some special cases we can state details on the integrated evolution:
One may restrict the study to indecomposable parts.
We use the same notation as in Appendix \ref{appblock}, applied to the operators in the subspace
$\Hh_j\oplus\Hh_\ell$, writing $A=h_j$ and $D=h_\ell$ for the parts of $h$.
The phase relations are expressed in $Q=\rho_{j,\ell}$.
They evolve, according to equation (\ref{qdot}) as
\beq\label{phasedot}
\dot{Q}=AQD^\dag-\half(A^\dag A Q+ Q D^\dag D),
\eeq

\textbf{a)}
Looking at a diagonalizable $h$, with $A=a$ and $D=d$ just numbers,
the phase relations, the off-diagonal elements of $\rho$, evolve according to
equation (\ref{phasedot}) as
\beq
\frac d{dt} q = -\half (|a-d|^2+(a^* d-a\, d^*))q,
\eeq
giving exponential decrease,
$|q(t)|= \exp(-|a-d|^2)q(0),$
together with a phase rotation.

\textbf{b)}
Consider $D=h_\ell$ having zero-eigenvectors and $QD^\dag=0$.
Then $$\dot{Q}=-\half A^\dag AQ.$$
Expand $Q=\sum_\alpha c_\alpha Q_\alpha$ with $A^\dag AQ_\alpha=a_\alpha Q_\alpha$.
Exponential decrease is the consequence:

\beq
Q(t)=\sum_\alpha e^{-ta_\alpha/2}Q_\alpha.
\eeq

\textbf{c)}
Consider $\Hh_\ell$ as one-dimensional, with $h_\ell=d$, and the other part $h_j=A$ as indecomposable.
Then equation (\ref{phasedot}) implies monotone decrease of $\Tr(Q^\dag Q)$:
\beq
\frac d{dt} \Tr (Q^\dag Q) =-\Tr (Q^\dag[(A^\dag-d^\dag)(A-d)]Q)\leq 0.
\eeq
This inequality is strict, unless $d$ is an eigenvalue of $A$ and the columns of $Q$ are corresponding eigenvectors.
In case of $(A-d)Q=0$ this first order derivative of $\|Q\|^2$ vanishes. Nevertheless one gets
\beq\label{firstdiff}
\dot{Q}=(|d|^2-\half A^\dag d-\half|d|^2)Q=\half d(d^*-A^\dag)Q\neq 0,
\eeq
since $A$ is an indecomposable $h_j$, so $A$ and $A^\dag$ have no common eigenvalues.
There is no stationary phase relation.
The second order derivative of $Q^\dag Q$ in t is
also zero, but the third order derivative is strictly negative.

To make the calculation transparent, we consider $Q$ as an element $|Q\rangle$
of the Hilbert-Schmidt-space, with the inner product $\langle P|Q\rangle=\Tr (P^\dag Q)$.
Time-derivative is a super-operator $\D$, mapping $|Q\rangle$ to $|\dot{Q}\rangle$.
In the formulas we use the same letter $A$ for the special super-operator
which we define as $A|Q\rangle :=|A \, Q\rangle$,
as for the operator in $\Hh$.
\beq
|\dot{Q}\rangle=\D|Q\rangle=[-\half(A^\dag-d^*)(A-d)+\half(Ad^*-A^\dag d)]|Q\rangle .
\eeq
Its adjoint operator is
\beq
\langle \dot{Q}|=\langle Q|\D^\dag =\langle Q|[-\half(A^\dag-d^*)(A-d)-\half(ad^*-A^\dag d)] .
\eeq
In this formalism we get
$\frac d{dt} \Tr Q^\dag Q =\langle Q|(\D^\dag+\D)|Q\rangle$
and
\beq\label{timediff}
\frac {d^n}{dt^n} \Tr Q^\dag Q =\sum_{m=0}^n
\left( \begin{array}{c} n \\ m \end{array} \right)
\langle Q|\D^{\dag^m}\D^{n-m}|Q\rangle
\eeq
The assumption $AQ=dQ$ gives $(\D^\dag+\D)|Q\rangle=0$ and its adjoint,
$\langle Q|(\D^\dag+\D)=0$.
Now we write the second and first derivatives, using (\ref{timediff}), as
$$\frac {d^2}{dt^2} \Tr Q^\dag Q = \langle Q|\left((\D^\dag+\D)\D+\D^\dag(\D^\dag+\D)\right)|Q\rangle =0,$$
and, using $|Q\rangle\langle Q|/\|Q\|^2<\one$,
$$\frac {d^3}{dt^3} \Tr Q^\dag Q = \langle Q|
\left((\D^\dag+\D)\D^2+\D^\dag(\D^\dag+\D)\D+\D^{\dag^2}(\D^\dag+\D)\right)|Q\rangle$$
$$=\langle Q|\D^\dag(\D^\dag+\D)\D|Q\rangle =-\langle Q|\D^\dag(A^\dag-d^*)(A-d)\D|Q\rangle $$
$$\leq-\langle Q|\D^\dag(A^\dag-d^*)|Q\rangle\langle Q|(A-d)\D|Q\rangle/\|Q\|^2=
-\|\half d(d^*-A^\dag)Q\|^4/\|Q\|^2 < 0,$$
with a strict inequality, as in equation (\ref{firstdiff}).

We \emph{warn} of thinking to generalize these examples.
There is in general no monotone decrease of phase relations,
not in the form of a decrease of the HS-norm of $Q$.
Some remarks are stated in Subsection \ref{mixing}.

\subsection{Twodimensional systems}\label{twobytwo}

We present the stationary states, the investigations on the eigenvalues
and eigenmatrices of $\D$, and make remarks on the paths.

If $h$ is constant, then $\D_h$ is zero. This case is nevertheless worth mentioning.
In the way of decomposing a larger system it may appear as a restriction to a subspace.
The constant $h$ is then active in the evolution of the phase relations with other
subspaces.

For non constant $h$ consider the triangulated representation
\beq
h=\left( \begin{array}{cc} a & b \\ 0 & d \end{array} \right).
\eeq

\medskip $\bullet$
If $\textbf{b=0}$ there are four eigenmatrices of $\D_h$,
$$\left( \begin{array}{cc} 1 & 0 \\ 0 & 0 \end{array} \right),\quad
\left( \begin{array}{cc} 0 & 0 \\ 0 & 1 \end{array} \right),\quad
\left( \begin{array}{cc} 0 & 1 \\ 0 & 0 \end{array} \right),\quad
\left( \begin{array}{cc} 0 & 0 \\ 1 & 0 \end{array} \right),\quad
$$
the first two belonging to the eigenvalue $0$,
the other two to $\lambda=\half|a-d|^2+\half(a^*d-ad^*)$ and to $\lambda^*$.
Each diagonal $\rho$ is constant in time, dephasing occurs as stated in Subsection \ref{phases}, part a.

\medskip $\bullet$
For any $h$ with $\textbf{b}\neq \textbf{0}$
there is a unique normed stationary state
 $\rho$, with
\beq
\rho=\left(\begin{array}{cc} r & q \\ q^* & s \end{array} \right),\quad r=1-s,\quad
s=\frac{|a|^2}{|a|^2+|b|^2+|d|^2}\,,\quad q=\frac{-a^*b}{|a|^2+|b|^2+|d|^2}\,.
\eeq
This is easily seen by considering the equations (\ref{rdot},\ref{qdot},\ref{sdot})
in the simpler form with numbers instead of matrix-blocks;
and it is, of course, consistent with equation (\ref{invrho}).

Any kind of $\rho$ can appear as a stationary state for some $\D_h$, except the maximally mixed $\omega=\one/2$.
The stationary $\rho_0$ is a pure state if $a=0$ or $d=0$.

Concerning all the eigenvalues $\lambda$ of $\D_h$ and its eigenmatrices $\sigma=
\left(\begin{array}{cc} r & q \\ p & s \end{array} \right)$,
we note that the preservation of the trace takes the form $\dot{s}+\dot{r}=\lambda(r+s)=0$.
The case $\lambda=\lambda_0=0$ involves the unique stationary state.
The other eigenmatrices belong to $\lambda_i\neq 0$ and have $r+s=0$.
By using the equations (\ref{qdot},\ref{mdot},\ref{sdot}),
together with this condition $\Tr\sigma=0$, one gets
the eigenvalue equation $\dot{\sigma}=\D_h(\sigma)=\lambda\sigma$ in the form
\beq\label{threematrix}
\frac d{dt}\left( \begin{array}{c} q \\ p \\ s \end{array} \right)=
\left( \begin{array}{ccc} ad^*-\|h\|^2_2/2 & 0 & bd^* \\ 0 & a^*d-\|h\|^2_2/2  & b^*d \\
-\half ab^*  & -\half a^*b & -|b|^2  \end{array} \right)
\left( \begin{array}{c} q \\ p \\ s \end{array} \right)=
\lambda\left( \begin{array}{c} q \\ p \\ s \end{array} \right).
\eeq
The remaining eigenvalues of $\D$ are the zeros of
the characteristic polynomial for the $3\times 3$ matrix appearing in this equation.

\medskip $\bullet$
For $\textbf{a=0}$ there are only real eigenvalues.
They are $\lambda_1=-|b|^2$, $\lambda_2=\lambda_3=-\|h\|^2_2/2$,
the eigenmatrices are
\beq\label{eigenmatrices}
\sigma_1=\left( \begin{array}{cc} |b|^2-|d|^2 & 2bd^* \\ 2b^*d & |d|^2-|b|^2 \end{array} \right),\quad
\sigma_2=\left( \begin{array}{cc} 0 & 1 \\ 0 & 0 \end{array} \right),\quad
\sigma_3=\left( \begin{array}{cc} 0 & 0 \\ 1 & 0 \end{array} \right).
\eeq

For $a=0$ and $|b|=|d|\neq 0$ all three zeros of the characteristic polynomial coincide, and the eigenmatrix
$\sigma_1$ is no longer linearly independent of
$\sigma_2$ and $\sigma_3$. But there is now a generalized eigenmatrix
$$\hat{\sigma}_1=\left( \begin{array}{cc} 1 & 0 \\ 0 & -1 \end{array} \right),$$
satisfying the equation $\D(\hat{\sigma}_1)+|b|^2\hat{\sigma}_1=|b|^2(\sigma_2+\sigma_3)$.

\medskip $\bullet$
For $a\neq 0$, $b\neq 0$, $d\neq 0$ (note that $d=0$ is equivalent to $a=0$)
the three eigenvalues of $\D$ which are not zero
have to be calculated by finding the zeros of
the characteristic polynomial for the matrix appearing in (\ref{threematrix}), a polynomial of third order.
A simple case occurs
when $a$ and $d$ are real numbers and either $b$ or $bi$ is also real.
In these cases one real eigenvalue is $\lambda_1=-(|a-d|^2+|b|^2)/2$,
and the corresponding eigenmatrix $\sigma_1$ is, in case $b$ is real, the selfadjoint matrix
$\sigma_1=\left( \begin{array}{cc} 0 & i \\ -i & 0 \end{array} \right).$
In the case $b$ is imaginary it is
$\sigma_1=\left( \begin{array}{cc} 0 & 1 \\ 1 & 0 \end{array} \right).$

\medskip $\bullet$
Quitting the boundary, starting from the \textbf{lazy state}
$\rho(0)=\left( \begin{array}{cc} 1 & 0 \\ 0 & 0 \end{array} \right)$:
The eigenvalue $e(t)$ to the density matrix $\rho(t)$,
starting as $e(t)=0$, can be analyzed in its behavior
as an analytic function. Its Taylor expansion gives, as has been stated and proved in Lemma
\ref{secondorder}, $\dot{e}(t)=0$ and $\ddot{e}(t)=0$.
Continuing the procedure of solving the eigenvalue equation term by term according to
the order of $t^n$ gives: $$\frac{d^3}{dt^3}e(t)|_{t=0}=\half|a|^2|b|^2|a-d|^2\,.$$
Also this third order derivative vanishes if $a=d$, whereas, in this case
 $$\frac{d^4}{dt^4}e(t)|_{t=0}=\frac78|a|^2|b|^6.$$

\subsection{Mixing and demixing}\label{mixing}

The mixing property of a state can change, in the course of time, in any direction
for general $h$.
The case of dissipation can start with a pure state and will end up in a mixed state.
In case of decay, when $h$ has only one eigenvector to the eigenvalue zero,
each mixed state will decay and approach the pure decay product.
The path in $\s$ can also start from the boundary and go through the maximally mixed state $\omega$,
the following evolution will then act as purifying.
This means increase of entropy, followed by its decrease. Such cases have been found numerically in \cite{NTP07}
and in \cite{D04}.
If such an up-and-down process appears in parallel fashion
in two parts of the system, the phase relations between them can also show this non-monotonic behavior.

One may imagine a fast decay procedure with up-down evolution of entropy in a subspace,
followed by a slower procedure, going over to another subspace.
What is the end product of the first decay is then decaying again.
It is of course possible that these happenings repeat several times,
giving several ups and downs of entropy.
From the mathematicians point of view there is the question of forming
functions by superpositions of a finite number of exponential functions.
In reality these things happen in nuclear processes.

\section{Summary and Conclusion}\label{summary}

We consider the quantum dynamical semigroups representing
time evolutions of open systems with finite dimensional Hilbert space.
In this paper we concentrate on semigroups with a simple GKS-Lindblad generator.
An inequality concerning the decay of eigenvalues is recognized as
an important clue to the behavior of mixed states near the boundary of the set of states;
the rank can not decrease in finite time.
The complementary clue is a study on the increase of the rank of $\rho(t)$, eigenvalues
which start at zero getting positive.
There are few exceptions to an increase in first order $\sim t$.
For this analysis the characterization of structures is essential,
representing the operator $h$ which is used to build the generator $\D_h$
as an upper triangular matrix.
In relation to the subspaces appearing as eigenspaces to $h$ we define,
in one to one correspondence,
certain faces of $\s$, the convex set of states, as ``lazy'' faces,
since the rank of the $\rho$ in such a face does not increase in first order $\sim t$.

The three basic ways of non-invertible time evolution in physics,
i.e. dissipation, decay and decoherence/dephasing are mirrored
in the mathematical structures of $h$ and of the generator $\D_h$,
and the decomposition of $h$ is the basis for a complete characterization
of all the stationary states, including a concrete formula.
We have moreover characterized the dephasing as not allowing circular paths
of states, and we have characterized the stationary states as attractive.
With a view on applications in ``quantum engineering'' we stated
a possibility of inverting the task: find an evolution which leads to a
given state. This task can be fulfilled for each state with one exception:
The maximally mixed state.

Hereupon we demonstrated the findings on two-dimensional systems.

\section{Appendix}\label{appendix}

\subsection{Complete positivity}\label{cpmaps}

Dynamical maps of $\s$ into $\s$ have to preserve the trace.
The dual maps are then ``unital'' i.e. they map $\one$ to $\one$.

Stinespring's theorem states that $\phi$ mapping $\B(\Hh)$ to $\B(\Hh)$ is
completely positive and unital iff
\beqa
\exists\Hh_2,\quad \exists U=(U^\dag)^{-1}\,{\rm on}\,\,\Hh\otimes\Hh_2,\quad \exists\rho_2\in\s(\Hh_2)\nonumber\\
 \forall F:\quad
\Phi(F)=\Tr_2\left( (\one\otimes\rho_2)\cdot U^\dag \cdot (F\otimes\one)\cdot U\right) .\label{stine}
\eeqa
The predual map, acting on a state represented by the density matrix $\rho\in\B(\Hh)$ is therefore
\beq\label{cprho}
\rho\mapsto\Tr_2\left( U\cdot(\rho\otimes\rho_2)\cdot U^\dag \right).
\eeq
From a physicists point of view, (\ref{stine}) or (\ref{cprho}) are the relevant equations for a quantum dynamical map,
where  $\B(\Hh)$ is the algebra of observables for a system, and $\rho_2\in\s(\Hh_2)$ is the state of the ``reservoir''.
So we actually don't need the mathematicians definition of complete positivity.
In the sequel we use only (\ref{stine}) instead.

\begin{proposition}{\bf The Kadison inequality.}
If $\Phi$ is a unital completely positive map $\B(\Hh)\rightarrow\B(\Hh)$,
then for each $F\in\B(\Hh)$ there holds the inequality
\beq
\Phi(F^\dag F)\geq \Phi(F^\dag)\Phi(F)
\eeq
\end{proposition}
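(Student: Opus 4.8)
The plan is to rewrite the Stinespring representation (\ref{stine}) as a dilation of the form $\Phi(F)=V^\dag\pi(F)V$, with $V$ an isometry and $\pi$ a $*$-representation, and then to read off the inequality from the positivity of $\one-VV^\dag$. First I would purify the reservoir state: choose an auxiliary space $\Hh_{2'}$ and a unit vector $|\psi\rangle\in\Hh_2\otimes\Hh_{2'}$ with $\Tr_{2'}|\psi\rangle\langle\psi|=\rho_2$. Defining $V:\Hh\to\Hh\otimes\Hh_2\otimes\Hh_{2'}$ by $V|\xi\rangle=(U\otimes\one_{2'})(|\xi\rangle\otimes|\psi\rangle)$ and $\pi(F)=F\otimes\one_2\otimes\one_{2'}$, one checks that $V$ is an isometry --- since $U$ is unitary and $|\psi\rangle$ is normalized, $\langle V\xi|V\eta\rangle=\langle\xi|\eta\rangle$ --- and that $\pi$ is a genuine $*$-representation of $\B(\Hh)$.

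The next step is to verify that this dilation reproduces $\Phi$. Computing $V^\dag\pi(F)V$ conjugates $\pi(F)$ by $(U\otimes\one_{2'})$, producing $(U^\dag(F\otimes\one)U)\otimes\one_{2'}$, and then sandwiches this between the two copies of $|\psi\rangle$ living on $\Hh_2\otimes\Hh_{2'}$. Tracing out $\Hh_{2'}$ first replaces $|\psi\rangle\langle\psi|$ by its reduced state $\rho_2$, so that $V^\dag\pi(F)V=\Tr_2[(\one\otimes\rho_2)U^\dag(F\otimes\one)U]=\Phi(F)$, matching (\ref{stine}) exactly. Unitality is then automatic and consistent, $\Phi(\one)=V^\dag V=\one$; it is precisely the isometry identity $V^\dag V=\one$ that encodes unitality and makes the final estimate clean.

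With the dilation in hand the conclusion is immediate. Since $\pi$ is multiplicative and adjoint preserving, $\pi(F^\dag F)=\pi(F)^\dag\pi(F)$ and $\Phi(F^\dag)=V^\dag\pi(F)^\dag V=\Phi(F)^\dag$, hence
\[
\Phi(F^\dag F)-\Phi(F^\dag)\Phi(F)=V^\dag\pi(F)^\dag\pi(F)V-V^\dag\pi(F)^\dag(VV^\dag)\pi(F)V=G^\dag(\one-VV^\dag)G,
\]
where $G:=\pi(F)V$. Because $V$ is an isometry, $P:=VV^\dag$ is an orthogonal projection, so $\one-P\geq 0$ and therefore $G^\dag(\one-P)G\geq 0$, which is the asserted inequality.

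I expect the only real obstacle to be the bookkeeping in the first two steps: producing an honest isometry $V$ out of the unitary $U$ together with a purification of $\rho_2$, and confirming that the partial traces collapse correctly so that (\ref{stine}) is recovered. Once $\Phi$ is written as $V^\dag\pi(\cdot)V$, the positivity argument is a one-line consequence of $VV^\dag\leq\one$, which is exactly where trace preservation enters. As an independent cross-check one could instead apply $\Phi$ entrywise to the manifestly positive block matrix $\left(\begin{array}{cc}F^\dag F & F^\dag\\ F & \one\end{array}\right)$ and invoke a Schur-complement argument; but since the paper commits to using only (\ref{stine}), the dilation route above is the natural one.
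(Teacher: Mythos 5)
Your proof is correct and is essentially the paper's own argument written in explicit dilation language: the paper likewise purifies $\rho_2$ to a vector $\Psi$ on a doubled reservoir, rewrites $\Phi(F)$ as a compression $\langle\Psi|U^\dag(F\otimes\one\otimes\one)U|\Psi\rangle$, and inserts $\sum_n|\phi_n\otimes\Psi\rangle\langle\phi_n\otimes\Psi|\leq\one$ between $\Phi(F^\dag)$ and $\Phi(F)$, which is precisely your $VV^\dag\leq\one$ with $V|\xi\rangle=U(|\xi\rangle\otimes|\Psi\rangle)$. The difference is only notational packaging ($V$, $\pi$), not substance.
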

\begin{proof}
By doubling the ``reservoir''-space $\Hh_2$ to $\Hh_2\otimes\Hh_2$
one can consider the state $\rho$ in $\B(\Hh_2)$ as the restriction of an entangled pure state,
with a state vector $\Psi$ in $\Hh_2\otimes\Hh_2$.
Letting $U$ act on $\Hh\otimes\Hh_2\otimes\Hh_2$ one can reformulate (\ref{stine}) as
\beq
\Phi(F)=\langle\Psi |U^\dag \cdot (F\otimes\one\otimes\one)\cdot U|\Psi\rangle.\label{stinetwo}
\eeq
Now using $|\Psi\rangle\langle\Psi|<\one$ in $\Phi(F^\dag)\cdot\Phi(F)$ gives the Kadison inequality.
\end{proof}
If one feels uneasy with the abstract algebraic notation,
one may read (\ref{stinetwo}) as
\beq
\langle\phi|\Phi(F)|\psi\rangle=
\langle\phi\otimes\Psi |\,U^\dag \cdot (F\otimes\one\otimes\one)\cdot U\,|\psi\otimes\Psi\rangle,
\eeq
and use a set $|\phi_n \rangle$ of basis vectors in $\Hh$ to write the product $\Phi(F^\dag)\cdot\Phi(F)$ in
matrix notation, and insert $\sum_n|\phi_n\otimes\Psi\rangle\langle\phi_n\otimes\Psi|<\one$
between $\Phi(F^\dag)$ and $\Phi(F)$.

\subsection{Block matrices}\label{appblock}

Consider a splitting of $\Hh$ into subspaces, $\Hh=\Hh_1\oplus\Hh_2$. Write
vectors $\psi=\chi\oplus\phi$ as $\left( \begin{array}{c} \chi \\ \phi \end{array} \right)$
and write operators acting in $\Hh$ as block matrices:

\beqa\label{blockmatrices}
h=\left( \begin{array}{cc} A & B \\ C & D \end{array} \right) \\
\rho=\left( \begin{array}{cc} R & Q \\ Q^\dag & S \end{array} \right) \\
\textrm{The action of } \D_h :\quad \rho \quad\mapsto\quad \dot{\rho} =
\left( \begin{array}{cc} \dot{R} & \dot{Q}\\ \dot{Q}^{\dag}& \dot{S}\end{array} \right)
\eeqa
We need two special cases:

The first one is the case of density matrices $\rho$ with rank less than the dimension of the Hilbert space:

\beq\label{rankplus}
\textrm{If } Q=0\textrm{ and }S=0,\textrm{ then }\quad\quad\dot{S}=C\, R\, C^\dag.
\eeq
\bigskip
The second special case with no restriction on $\rho$ uses the Schur-Toeplitz triangulation, \cite{L69},
of the generating matrix $h$:
\bigskip\\If $C=0$, then
\beqa
\dot{R}&=& \D_A (R) + BSB^\dag +AQB^\dag +BQ^\dag A^\dag  -\half [QB^\dag A + A^\dag BQ^\dag ]\label{rdot} \\
\dot{Q}&=&AQD^\dag +BSD^\dag -\half [A^\dag A Q + QB^\dag B+QD^\dag D+RA^\dag B+A^\dag BS ] \label{qdot} \\
\dot{S}&=& \D_D (S) -\half [B^\dag BS+SB^\dag B \, + \, B^\dag AQ+Q^\dag A^\dag B].\label{sdot}
\eeqa
For a general matrix
$$\sigma=\left( \begin{array}{cc} R & Q \\ M & S \end{array} \right),$$
 which is in general not self adjoint, we
 have to replace $Q^\dag$ in (\ref{rdot}) and in (\ref{sdot}) by $M$, and we also need
 \beq\label{mdot}
\dot{M}=D M A^\dag +D S B^\dag -\half [M A^\dag A  + B^\dag B M+D^\dag D M+B^\dag A R +S B^\dag A ]
 \eeq

\end{document}